\def\ZVAS{$\mathbb{Z}$-VAS}
\def\ZVASS{$\zint$-\text{VASS}}
\def\ZVASSr{$\mathbb{Z}$-VASS$_{\text{R}}$}
\def\VASS{VASS}
\def\ZRM{$\zint$-RM}
\newcommand{\nat}{\mathbb{N}}
\newcommand{\zint}{\mathbb{Z}}
\newcommand{\dotcup}{\ensuremath{\uplus}}
\newcommand{\eval}[1]{\llbracket#1 \rrbracket}
\newcommand{\problemx}[3]{
\par\noindent\underline{\sc#1}\par\nobreak\vskip.2\baselineskip
\begingroup\clubpenalty10000\widowpenalty10000
\setbox0\hbox{\bf INPUT: }\setbox1\hbox{\bf QUESTION: }
\dimen0=\wd0\ifnum\wd1>\dimen0\dimen0=\wd1\fi
\vskip-\parskip\noindent
\hbox to\dimen0{\box0\hfil}\hangindent\dimen0\hangafter1\ignorespaces#2\par
\vskip-\parskip\noindent
\hbox to\dimen0{\box1\hfil}\hangindent\dimen0\hangafter1\ignorespaces#3\par
\endgroup}
\newcommand{\problemy}[3]{
\par\noindent\underline{\sc#1}\par\nobreak\vskip.2\baselineskip
\begingroup\clubpenalty10000\widowpenalty10000
\setbox0\hbox{\bf INPUT: }\setbox1\hbox{\bf OUTPUT: }
\dimen0=\wd0\ifnum\wd1>\dimen0\dimen0=\wd1\fi
\vskip-\parskip\noindent
\hbox to\dimen0{\box0\hfil}\hangindent\dimen0\hangafter1\ignorespaces#2\par
\vskip-\parskip\noindent
\hbox to\dimen0{\box1\hfil}\hangindent\dimen0\hangafter1\ignorespaces#3\par
\endgroup}
\begin{document}

\title{Integer Vector Addition Systems with States}
\author{Christoph Haase\thanks{Supported by the
    French ANR, \textsc{ReacHard}
    (grant ANR-11-BS02-001).} \and Simon Halfon}
\institute{Laboratoire Sp\'ecification et V\'erification (LSV), CNRS\\ 
  \'Ecole Normale Sup\'erieure (ENS) de Cachan, France}

\maketitle

\begin{abstract}
  This paper studies reachability, coverability and inclusion problems
  for Integer Vector Addition Systems with States (\ZVASS) and
  extensions and restrictions thereof. A \ZVASS \ comprises a
  finite-state controller with a finite number of counters ranging
  over the integers. Although it is folklore that reachability in
  \ZVASS \ is \NP-complete, it turns out that despite their
  naturalness, from a complexity point of view this class has received
  little attention in the literature. We fill this gap by providing an
  in-depth analysis of the computational complexity of the
  aforementioned decision problems. Most interestingly, it turns out
  that while the addition of reset operations to ordinary VASS leads
  to undecidability and Ackermann-hardness of reachability and
  coverability, respectively, they can be added to \ZVASS \ while
  retaining \NP-completeness of both coverability and reachability.
\end{abstract}


\section{Introduction}

Vector Addition Systems with States (\VASS) are a prominent class of
infinite-state systems. They comprise a finite-state controller with a
finite number of counters ranging over the natural numbers. When
taking a transition, an integer can be added to a counter, provided
that the resulting counter value is non-negative. A configuration of a
\VASS \ is a tuple $q(\vec{v})$ consisting of a control state $q$ and
a vector $\vec{v}\in \nat^d$, where $d>0$ is the number of counters
or, equivalently, the dimension of the VASS. The central decision
problems for \VASS \ are reachability, coverability and
inclusion. Given configurations $q(\vec{v})$, $q'(\vec{v}')$ of a
\VASS \ $\mathcal{A}$, reachability is to decide whether there is a
path connecting the two configurations in the transition system
induced by $\mathcal{A}$. Coverability on the other hand asks whether
there is a path from $q(\vec{v})$ to a configuration that is ``above''
$q'(\vec{v}')$, \emph{i.e.}, a path to some $q'(\vec{w})$ such that
$\vec{w}\ge \vec{v}'$, where $\ge$ is interpreted component-wise.
Finally, given \VASS \ $\mathcal{A}$ and $\mathcal{B}$, inclusion asks
whether the set of counter values reachable in the transition system
induced by $\mathcal{A}$ is contained in those reachable by
$\mathcal{B}$. All of the aforementioned problems have extensively
been studied over the course of the last forty years. One of the
earliest results was obtained by Lipton, who showed that reachability
and coverability are \EXPSPACE-hard~\cite{Lipt76}. Later, Rackoff
established a matching upper bound for coverability~\cite{Rack78}, and
Mayr showed that reachability is decidable~\cite{Mayr81}. For
inclusion, it is known that this problem is in general
undecidable~\cite{Hack76} and Ackermann
($\mathbf{F}_{\omega}$)-complete~\cite{Jan01} when restricting to
\VASS \ with a finite reachability set. Moreover, various extensions
of VASS with, for instance, resets or polynomial updates on counter
values have been studied in the literature. Resets allow for setting a
counter to zero along a transition, and polynomial updates allow for
updating a counter with an arbitrary polynomial. In general,
reachability in the presence of any such extension becomes
undecidable~\cite{DFS98,FGH13}, while the complexity of coverability
increases significantly to $\mathbf{F}_{\omega}$-completeness in the
presence of resets~\cite{Schn10}.

What makes \VASS \ hard to deal with, both in the computational and in
the mathematical sense, is the restriction of the counters to
non-negative integers. This restriction allows for enforcing an order
in which transitions can be taken, which is at the heart of many
hardness proofs.  In this paper, we relax this restriction and study
\ZVASS \ which are VASS whose counters can take values from the
integers, and extensions thereof. Thus, the effect of transitions can
commute along a run of a \ZVASS, which makes deciding reachability
substantially easier, and it is in fact folklore that reachability in
\ZVASS \ is \NP-complete.  It appears, however, that many aspects of
the computational complexity of standard decision problems for \ZVASS
\ and extensions and restrictions thereof have not received much
attention in the literature.

\subsubsection*{Our contribution.} The main focus of this paper\footnote{A
full version containing all proofs omitted due to space constraints
can be obtained from \url{http://arxiv.org/abs/1406.2590}.}
is to study the computational complexity of reachability, coverability
and inclusion for \ZVASS \ equipped with resets (\ZVASSr). Unlike in
the case of \VASS, we can show that reachability and coverability are
naturally logarithmic-space inter-reducible. By generalizing a
technique introduced by Seidl~\emph{et al.}~\cite{SSMH04} for defining
Parikh images of finite-state automata in existential Presburger
arithmetic, we can show that a given instance of reachability (and
\emph{a fortiori} coverability) in \ZVASSr \ can be reduced in
logarithmic-space to an equivalent sentence in existential Presburger
arithmetic, and henceforth both problems are $\NP$-complete. Moreover,
by exploiting a recent result on the complexity of Presburger
arithmetic with a fixed number of quantifier
alternations~\cite{Haa14}, this reduction immediately yields
$\coNEXP$-membership of the inclusion problem for \ZVASSr. We also
show that a matching lower bound can be established via a reduction
from validity in $\Pi_2$-Presburger arithmetic. This lower bound does
not require resets and thus already holds for \ZVASS. Along the way,
wherever possible we sharpen known lower bounds and propose some
further open problems.

\subsubsection*{Related Work.} The results obtained in this paper are
closely related to decision problems for commutative grammars,
\emph{i.e.}  Parikh images of, for instance, finite-state automata or
context-free grammars. A generic tool that is quite powerful in this
setting is to define Parikh images as the set of solutions to certain
systems of linear Diophantine equations. This approach has, for
instance, been taken in~\cite{Esp97,PR99,SSMH04,HKOW09,HL11}. As
stated above, we generalize the technique of Seidl~\emph{et al.},
which has also been the starting point in~\cite{HL11} in order to show
decidability and complexity results for pushdown systems equipped with
reversal-bounded counters.

Furthermore, results related to ours have also been established by
Kopczy{\'n}ski \& To. In~\cite{KT10}, they consider inclusion problems
for regular and context-free commutative grammars, and show that for a
fixed alphabet those problems are $\coNP$- and
$\mathsf{\Pi}_2^\P$-complete, respectively. As a matter of fact, the
proof of the $\mathsf{\Pi}_2^\P$-upper bound is established for
context-free commutative grammars in which, informally speaking,
letters can be erased, which can be seen as a generalization of
\ZVASS. In general, inclusion for context-free commutative grammars is
in \coNEXP~\cite{Huy85}, but it is not known whether this bound is
tight. Also related is the work by Reichert~\cite{Rei13}, who studies
the computational complexity of reachability games on various classes
of \ZVASS. Finally, \ZVASS \ are an instance of valence automata,
which have recently, for instance, been studied by Buckheister \&
Zetzsche~\cite{BZ13}. However, their work is more concerned with
language-theoretic properties of valence automata rather than aspects
of computational complexity. Language-theoretic aspects of \ZVASS
\ have also been studied by Greibach~\cite{Grei78}.

As discussed above, \ZVASS \ achieve a lower complexity for standard
decision problems in comparison to \VASS \ by relaxing counters to
range over the integers. Another approach going into a similar
direction is to allow counters to range over the positive reals. It
has been shown in recent work by Fraca \& Haddad~\cite{FH13} that the
decision problems we consider in this paper become substantially
easier for such continuous \VASS, with reachability even being
decidable in \P.



\section{Preliminaries}
\label{sec:def}
In this section, we provide most of the definitions that we rely on in
this paper. We first introduce some general notation and subsequently
an abstract model of register machines from which we derive \ZVASS \ as a
special subclass. We then recall and tighten some known complexity
bounds for \ZVASS \ and conclude this section with a brief account on
Presburger arithmetic.

\subsubsection*{General Notation.} 
In the following, $\zint$ and $\nat$ are the sets of integers and
natural numbers, respectively, and $\nat^d$ and $\zint^d$ are the set
of dimension $d$ vectors in $\nat$ and $\zint$, respectively. We
denote by $[d]$ the set of positive integers up to $d$, \emph{i.e.}
$[d]=\{1,\dots,d\}$. By $\nat^{d \times d}$ and $\zint^{d\times d}$ we
denote the set of $d\times d$ square matrices over $\nat$ and $\zint$,
respectively. The identity matrix in dimension $d$ is denoted by $I_d$
and $\vec{e}_i$ denotes the $i$-th unit vector in any dimension $d$
provided $i\in [d]$. For any $d$ and $i,j\in [d]$, $E_{ij}$ denotes
the $d\times d$-matrix whose $i$-th row and $j$-th column intersection is equal to
one and all of its other components are zero, and we use $E_i$ to
abbreviate $E_{ii}$. For $\vec{v} \in \zint^d$ we write $\vec{v}(i)$
for the $i$-th component of $\vec{v}$ for $i\in [d]$. Given two
vectors $\vec{v_1}, \vec{v_2} \in \zint^d$, we write $\vec{v_1} \ge
\vec{v_2}$ iff for all $i\in [d]$, $\vec{v_1}(i) \ge \vec{v_2}(i)$.
Given a vector $\vec{v} \in \zint^d$ and a set $S \subseteq [d]$, by
$\vec{v}_{|S}$ we denote the vector $\vec{w}$ derived from $\vec{v}$
with components from $S$ reset, \emph{i.e}, for all $j \in [d]$,
$\vec{w}(j) = \vec{v}(j)$ when $j \notin S$, and $\vec{w}(j) = 0$
otherwise. Given $i \in [d]$, $\vec{v}_{|i}$ abbreviates
$\vec{v}_{|\{i\}}$. If not stated otherwise, all numbers in this paper
are assumed to be encoded in binary.

\subsubsection*{Presburger Arithmetic.} 
Recall that \emph{Presburger arithmetic (PA)} is the first-order
theory of the structure $\langle \nat, 0, 1, +, \ge\rangle$,
\emph{i.e.}, quantified linear arithmetic over natural numbers. The
size $|\Phi|$ of a PA formula is the number of symbols required to
write it down, where we assume unary encoding of numbers\footnote{This
  is with no loss of generality since binary encoding can be simulated
  at the cost of a logarithmic blowup of the formula size. Note that
  in particular all complexity lower bounds given in this paper still hold
  assuming unary encoding of numbers.}.  For technical convenience, we
may assume with no loss of generality that terms of PA formulas are of
the form $\vec{z}\cdot \vec{x}\ge b$, where $\vec{x}$ is an $n$-tuple
of first-order variables, $\vec{z}\in \zint^n$ and $b\in \zint$. It is
well-known that the existential ($\Sigma_1$-)fragment of PA is
\NP-complete, see \emph{e.g.}~\cite{BT76}. Moreover, validity for the
$\Pi_2$-fragment of PA, \emph{i.e.} its restriction to a
$\forall^*\exists^*$-quantifier prefix, is
$\coNEXP$-complete~\cite{Grae89,Haa14}.

Given a PA formula $\Phi(x_1,\ldots,x_d)$ in $d$ free variables, we
define
\begin{align*}
  \eval{\Phi(x_1,\ldots,x_d)} & = \{ (n_1,\ldots,n_d)\in \nat^d : 
  \Phi(n_1/x_1,\ldots,n_d/x_d) 
  \text{ is valid}\}.
\end{align*}
Moreover, a set $M\subseteq \nat^d$ is \emph{PA-definable} if there
exists a PA formula $\Phi(x_1,\ldots,x_d)$ such that
$M=\eval{\Phi(x_1,\ldots,x_d)}$. Recall that a result due to Ginsburg
\& Spanier states that PA-definable sets coincide with the so-called
\emph{semi-linear sets}~\cite{GS66}.

\subsubsection*{Integer Vector Addition Systems.} 

The main objects studied in this paper can be derived from a general
class of integer register machines which we define below.
\begin{definition} 
  Let $\mathfrak{A}\subseteq \zint^{d\times d}$, a \emph{dimension
    $d$-integer register machine over $\mathfrak{A}$
    (\ZRM$(\mathfrak{A})$)} is a tuple $\mathcal{A}=(Q, \Sigma, d,
  \Delta, \tau)$ where
  \begin{itemize}
  \item $Q$ is a finite set of \emph{control states},
  \item $\Sigma$ is a finite \emph{alphabet},
  \item $d>0$ is the \emph{dimension} or the number of \emph{counters},
  \item $\Delta \subseteq Q \times \Sigma \times Q$ is a finite set of
    \emph{transitions},
  \item $\tau: \Sigma \to (\zint^d \to \zint^d)$ maps each $a\in
    \Sigma$ to an \emph{affine transformation} such that $\tau(a) =
    \vec{v} \mapsto A\vec{v} + \vec{b}$ for some $A\in \mathfrak{A}$
    and $\vec{b}\in \zint^d$.
  \end{itemize}
\end{definition}

We will often consider $\tau$ as a morphism from $\Sigma^*$ to the set
of affine transformations such that $\tau(\epsilon)=I_d$ and for any
$w\in \Sigma^*$ and $a\in \Sigma$, $\tau(wa)(\vec{v}) =
\tau(a)(\tau(w)(\vec{v}))$. The set $C(\mathcal{A})=Q\times \zint^d$
is called the \emph{set of configurations of $\mathcal{A}$}. For
readability, we write configurations as $q(\vec{v})$ instead of
$(q,\vec{v})$. Given configurations $q(\vec{v}), q'(\vec{v}')\in C$,
we write $q(\vec{v})\stackrel{a}{\rightarrow}_\mathcal{A}q(\vec{v}')$
if there is a transition $(q,a,q')\in \Delta$ such that $\vec{v}' =
\tau(a)(\vec{v})$, and $q(\vec{v}) \rightarrow_{\mathcal{A}}
q'(\vec{v}')$ if
$q(\vec{v})\stackrel{a}{\rightarrow}_\mathcal{A}q(\vec{v}')$ for some
$a\in \Sigma$. A \emph{run on a word $\gamma=a_1\cdots a_n\in
  \Sigma^*$} is a finite sequence of configurations $\varrho: c_0 c_1
\cdots c_n$ such that
$c_i\stackrel{a_{i+1}}{\rightarrow}_{\mathcal{A}} c_{i+1}$ for all
$0\le i<n$, and we write $c_0
\stackrel{\gamma}{\rightarrow}_\mathcal{A} c_n$ in this
case. Moreover, we write $c \rightarrow^*_\mathcal{A} c'$ if there is
a run $\varrho$ on some word $\gamma$ such that $c=c_0$ and
$c'=c_n$. Given $q(\vec{v})\in C(\mathcal{A})$, the \emph{reachability
  set starting from $q(\vec{v})$} is defined as
\begin{align*}
  \mathit{reach}(\mathcal{A},q(\vec{v})) = \{ \vec{v}'\in \zint^d :
  q(\vec{v}) \rightarrow^*_{\mathcal{A}} q'(\vec{v}') \text{ for some
  } q'\in Q\}.
\end{align*}



In this paper, we study the complexity of deciding reachability,
coverability and inclusion.
\vspace*{0.25cm}
\problemx{\ZRM$(\mathfrak{A})$ Reachability/Coverability/Inclusion}
  {\ZRM$(\mathfrak{A})$ $\mathcal{A}$, $\mathcal{B}$, configurations 
    $q(\vec{v}),q'(\vec{v}')\in C(\mathcal{A})$, $p(\vec{w})\in C(\mathcal{B})$.}
  {\emph{Reachability:} Is there a run $q(\vec{v})\rightarrow^*_\mathcal{A} q'
    (\vec{v}')$?\\
    \emph{Coverability:} Is there a $\vec{z}\in \zint^d$ s.t.
    $q(\vec{v}) \rightarrow^*_\mathcal{A} q'(\vec{z})$ and $\vec{z}\ge \vec{v}'$?\\
    \emph{Inclusion:} Does $\mathit{reach}(\mathcal{A},q(\vec{v})) \subseteq 
    \mathit{reach}(\mathcal{B}, p(\vec{w}))$ hold? 
  }
\vspace*{0.25cm}
If we allow an arbitrary number of control states, whenever it is
convenient we may assume $\vec{v},\vec{v}'$ and $\vec{w}$ in the
definition above to be equal to $\vec{0}$. Of course, \ZRM \ are very
general, and all of the aforementioned decision problems are already
known to be undecidable, we will further elaborate on this topic
below. We therefore consider subclasses of \ZRM$(\mathfrak{A})$ in
this paper which restrict the transformation mappings or the number of
control states: $\mathcal{A}$ is called
\begin{itemize}
\item \emph{integer vector addition system with states and resets
  (\ZVASSr)} if $\mathfrak{A} =  \{ \lambda_1E_1 + \cdots
  + \lambda_dE_d : \lambda_i\in \{0,1\}, i\in [d]\}$;
\item \emph{integer vector addition system with states (\ZVASS)} if
  $\mathfrak{A} = I_d$;
\item \emph{integer vector addition system (\ZVAS)} if $\mathcal{A}$
  is a \ZVASS \ and $|Q|=1$.
\end{itemize}
Classical vector addition systems with states (\VASS) can be recovered
from the definition of \ZVASS \ by defining the set of configurations
as $Q\times \nat^d$ and adjusting the definition of
$\rightarrow_{\mathcal{A}}$ appropriately. It is folklore that
coverability in \VASS \ is logarithmic-space reducible to reachability
in \VASS. Our first observation is that unlike in the case of VASS,
reachability can be reduced to coverability in \ZVASS, this even holds
for \ZVASSr.
Thanks to this observation, all lower and upper bounds for
reachability carry over to coverability, and \emph{vice versa}.
\begin{lemma}
\label{equiv}
  Reachability and coverability are logarithmic-space inter-reducible
  in each of the classes \ZVASSr, \ZVASS \ and \ZVAS. The reduction
  doubles the dimension.
\end{lemma}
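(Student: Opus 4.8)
The plan is to establish each direction separately, since the two reductions are quite different in character. The direction from coverability to reachability is essentially the integer analogue of the folklore \VASS \ construction and needs no increase in dimension, whereas the reduction from reachability to coverability is the genuinely new ingredient, and it is there that the doubling of the dimension arises.

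For coverability-to-reachability, I would take an instance asking whether $q(\vec{v})\to^*_\mathcal{A} q'(\vec{z})$ for some $\vec{z}\ge \vec{v}'$ and append to $\mathcal{A}$ a ``clean-up'' gadget: a fresh control state $q''$ reachable from $q'$ by a transition whose affine effect is the identity, carrying at $q''$ one self-loop per counter $i\in[d]$ whose effect is the pure translation $\vec{v}\mapsto \vec{v}-\vec{e}_i$. The new exact-reachability target is $q''(\vec{v}')$. Since over $\zint$ the decrement loops can subtract any non-negative amount from any counter, one reaches $q''(\vec{v}')$ exactly iff one first reaches some $q'(\vec{z})$ with $\vec{z}\ge \vec{v}'$ and then subtracts $\vec{z}-\vec{v}'\ge \vec{0}$ componentwise; conversely any run ending in $q''(\vec{v}')$ must have entered the gadget in a configuration $q'(\vec{z})$ with $\vec{z}\ge \vec{v}'$, because the gadget only decrements and $q''$ has no transition back into $\mathcal{A}$. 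For the single-state class \ZVAS \ no new state may be introduced, so instead I would add the decrement loops directly as self-loops on the unique control state; as every transition of a \ZVAS \ is a commuting translation there, the decrements may be taken last without loss of generality and the same equivalence goes through. This reduction leaves the dimension unchanged.

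For reachability-to-coverability, the idea is to track, alongside each counter, a \emph{negated shadow}. Given a machine in dimension $d$ I would build one in dimension $2d$ whose counters are indexed by $[d]\dotcup\{\bar 1,\dots,\bar d\}$, maintaining at all times the invariant $\vec{v}(\bar\imath)=-\vec{v}(i)$. Each original transition with effect $\vec{v}\mapsto A\vec{v}+\vec{b}$, where $A=\sum_i\lambda_iE_i$ is diagonal with $\lambda_i\in\{0,1\}$, is replaced by the transition whose diagonal part additionally copies $\lambda_i$ into coordinate $\bar\imath$ and whose offset is $\vec{b}(i)$ on coordinate $i$ and $-\vec{b}(i)$ on coordinate $\bar\imath$. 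A direct check shows the invariant is preserved both under a plain translation and under a reset-then-add step, and that the resulting matrix is again of the required diagonal $0/1$ form, so the construction stays inside \ZVASSr \ (and inside \ZVASS \ and \ZVAS \ when the original does, the latter because no control state is added). Starting from $q(\vec{v},-\vec{v})$ and posing the coverability query ``reach some $q'(\vec{z},\vec{z}')$ with $(\vec{z},\vec{z}')\ge(\vec{v}',-\vec{v}')$'' forces $\vec{z}\ge\vec{v}'$ and, using $\vec{z}'=-\vec{z}$, also $-\vec{z}\ge-\vec{v}'$, i.e.\ $\vec{z}\le\vec{v}'$; hence $\vec{z}=\vec{v}'$, which is exactly reachability of $q'(\vec{v}')$ in $\mathcal{A}$.

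Both transformations are purely local rewritings of the transition set with a constant-factor blow-up, so each is computable in logarithmic space. The main point to get right---and the only genuinely delicate step---is the shadow construction for \ZVASSr: I must verify that the invariant $\vec{v}(\bar\imath)=-\vec{v}(i)$ survives a reset, which is precisely why the shadow coordinate must be reset exactly when its original is, and that the doubled transformation is still an admissible element of $\mathfrak{A}$ in dimension $2d$ rather than some more general affine map. Everything else is a routine case analysis, and the dimension-doubling in the statement is inherited from this second reduction.
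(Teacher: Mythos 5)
Your proposal is correct and follows essentially the same route as the paper: the reachability-to-coverability direction is the paper's exact construction (a negated shadow copy in dimension $2d$ with block-diagonal matrices and offset $(\vec{b},-\vec{b})$, so that covering $q'(\vec{v}',-\vec{v}')$ forces equality), and the coverability-to-reachability direction is the folklore decrement-gadget that the paper invokes without details. Your explicit treatment of the single-state case \ZVAS \ via commutativity of translations is a small but welcome addition, since the standard gadget introduces a fresh control state that the paper's one-line appeal to folklore glosses over.
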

\begin{proof}
  The standard folklore construction to reduce coverability in \VASS
  \ to reachability in \VASS \ also works for all classes of
  \ZVASSr. For brevity, we therefore only give the reduction in the
  converse direction.

  Let $\mathcal{A}$ be from any class of \ZVASS \ in dimension $d$ and
  let $q(\vec{v}),q'(\vec{v}')\in C(\mathcal{A})$. We construct a
  \ZVASS \ $\mathcal{B}$ in dimension $2d$ with the property
  $q(\vec{v})\rightarrow^*_{\mathcal{A}} q'(\vec{v}')$ iff
  $q(\vec{v},-\vec{v})\rightarrow^*_{\mathcal{B}}
  q'(\vec{v}',-\vec{v}')$ as follows: any affine transformation
  $\vec{v} \mapsto A\vec{v} + \vec{b}$ is replaced by $\vec{v} \mapsto
  A'\vec{v} + \vec{b}'$, where
  \begin{align*}
    A' & = \begin{bmatrix}
      A & \vec{0}\\
      \vec{0} & A
    \end{bmatrix}
    & \vec{b}' = \begin{vmatrix} \vec{b} \\ -\vec{b}\end{vmatrix}.
  \end{align*}
  
  Any run $\varrho: q_0(\vec{v}_0)\cdots q_n(\vec{v}_n)$ in
  $\mathcal{B}$ such that $q_0(\vec{v}_0)=q(\vec{v},-\vec{v})$ and
  $q_n(\vec{v}_n)=q'(\vec{v}',-\vec{v}')$ corresponds in the first $d$
  components to a run in $\mathcal{A}$. Moreover, $\varrho$ has the
  property that for any $0\le i\le n$ and $q_i(\vec{v}_i)$,
  $\vec{v}_i(j)=-\vec{v}_i(j+d)$ for all $j\in [d]$. Therefore,
  $q(\vec{v},-\vec{v}) \rightarrow_\mathcal{B}^* q'(\vec{w},-\vec{w})$
  for some $q'(\vec{w},-\vec{w})$ that covers $q'(\vec{v}',-\vec{v}')$
  if, and only if, $\vec{w} \ge \vec{v}'$ and $-\vec{w}\ge -\vec{v}'$,
  \emph{i.e.}, $\vec{w}=\vec{v}'$ and thus in particular whenever
  $\mathcal{A}$ reaches $q'(\vec{v}')$ from $q(\vec{v})$.\qed
\end{proof}

\subsubsection*{Known Complexity Results for \ZVASS.} 
It is folklore that reachability in \ZVASS \ is \NP-hard. Most
commonly, this is shown via a reduction from \textsc{Subset Sum}, so
this hardness result in particular relies on binary encoding of
numbers and the presence of control states. Here, we wish to remark
the following observation.
\begin{lemma}
  \label{lem:np-hardness}
  Reachability in \ZVAS\ is \NP-hard even when numbers are encoded in
  unary.
\end{lemma}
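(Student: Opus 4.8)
The plan is to observe that reachability in \ZVAS\ is nothing more than an integer programming feasibility question, and then to reduce from an exact-cover problem whose natural encoding uses only the numbers $0$ and $1$, so that unary and binary encodings coincide. I would first record the following structural fact. A \ZVAS\ has a single control state and $\mathfrak{A}=I_d$, so every transition simply adds some vector $\vec{b}_j\in\zint^d$ to the counters, and any sequence of transitions is a legal run. Since addition is commutative and there are no non-negativity constraints on the counters, the effect of a run depends only on how many times each transition is used: a vector $\vec{t}$ lies in $\mathit{reach}(\mathcal{A},q(\vec{0}))$ if and only if there exist $n_1,\dots,n_k\in\nat$ with $\sum_{j} n_j\vec{b}_j=\vec{t}$, where $k=|\Delta|$ and $n_j$ counts the uses of the $j$-th transition. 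Thus \ZVAS\ reachability is exactly the feasibility of a system of linear Diophantine equations over the non-negative integers.

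Next I would reduce from \textsc{Exact Cover by 3-Sets} (X3C), which is \NP-complete: given a universe $U=\{u_1,\dots,u_{3q}\}$ and a family $S_1,\dots,S_k$ of three-element subsets of $U$, decide whether some subfamily partitions $U$. I construct a \ZVAS\ of dimension $d=3q$ with one transition per set $S_j$, whose affine transformation adds the indicator vector $\vec{b}_j=\sum_{u_i\in S_j}\vec{e}_i$, and I ask whether the target $\vec{t}=(1,\dots,1)$ is reachable from $\vec{0}$. By the structural fact this amounts to finding $n_j\in\nat$ with $\sum_{j\,:\,u_i\in S_j} n_j = 1$ for every element $u_i$. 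Because all coefficients are $0$ or $1$ and every element's equation sums to $1$, each $n_j$ is forced into $\{0,1\}$, so non-negative integer solutions correspond precisely to subfamilies covering each element exactly once; hence $\vec{t}$ is reachable if and only if the X3C instance is positive.

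Finally I would observe that every number appearing in the instance, matrix entries and target components alike, is $0$ or $1$, so the construction is a logarithmic-space reduction that is insensitive to the encoding of numbers and in particular works under unary encoding. The step I expect to require the most care is the forcing argument: I must argue cleanly that, since all $\vec{b}_j$ are non-negative, no cancellation can occur over $\zint$, and hence the equality constraints genuinely pin each multiplicity to $0$ or $1$ rather than merely bounding a relaxation. This is what rules out spurious integer solutions and makes the equivalence with X3C exact.
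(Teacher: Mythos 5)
Your proposal is correct, and its backbone is the same as the paper's: exploit commutativity and the absence of non-negativity constraints to identify $\mathit{reach}(\mathcal{A},q(\vec{0}))$ with the set of non-negative integer combinations $\sum_j n_j\vec{b}_j$ of the transition vectors, so that \ZVAS\ reachability becomes feasibility of a linear Diophantine system $A\vec{x}=\vec{b},\ \vec{x}\ge\vec{0}$. Where you diverge is in how the unary-encoding hardness is discharged: the paper simply cites the known result (Garey \& Johnson) that this feasibility problem is \NP-hard even with unary-encoded numbers, whereas you make the lemma self-contained by reducing from \textsc{Exact Cover by 3-Sets}, taking one self-loop per set $S_j$ with update vector $\vec{b}_j=\sum_{u_i\in S_j}\vec{e}_i$ and target $\vec{t}=(1,\dots,1)$. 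Your forcing argument is sound: since every $\vec{b}_j$ is a non-negative indicator vector of a nonempty set, $n_j\ge 2$ would push some coordinate of the sum above $1$, so every solution has $n_j\in\{0,1\}$ and corresponds exactly to an exact cover; and since all numbers in the construction are $0$ or $1$, the reduction is manifestly encoding-independent. What each approach buys: the paper's version is shorter and reduces from the more general problem (so it immediately yields the remark that inclusion for \ZVAS\ is \NP-hard, via scaling the target $\vec{b}$), while yours avoids relying on the strong-\NP-hardness citation and makes the unary claim transparent on inspection --- in effect you have inlined the standard proof of that cited result. One stylistic point: you could state explicitly that each $S_j$ is nonempty (guaranteed since $|S_j|=3$), as this is precisely what rules out unconstrained multiplicities on zero-effect loops.
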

The proof is given in the appendix of the full version of this paper
and follows straight-forwardly via a reduction from feasibility of a
system of linear Diophantine equations $A\vec{x}=\vec{b}, \vec{x}\ge
\vec{0}$, which is known to be \NP-complete even when unary encoding
of numbers is assumed~\cite{GJ79}. Apart from that, it is folklore
that reachability in \ZVASS \ is in \NP. To the best of the authors'
knowledge, no upper bounds for reachability, coverability or inclusion
for \ZVASSr \ have been established so far.

Next, we recall that slightly more general transformation matrices
lead to undecidability of reachability: when allowing for arbitrary
diagonal matrices, \emph{i.e.} affine transformations along
transitions, reachability becomes undecidable already in dimension
two~\cite{FGH13}. Consequently, by a straight forward adaption of
Lemma~\ref{equiv} we obtain the following.

\begin{lemma}
  \label{lem:undecidability}
  Let $\mathfrak{D}_d$ be the set of all diagonal matrices in
  dimension $d$. Coverability in \ZRM$(\mathfrak{D}_d)$ is undecidable
  already for $d=4$.
\end{lemma}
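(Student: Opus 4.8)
The plan is to combine two facts that are already stated in the excerpt: (i) reachability in $\ZRM(\mathfrak{D}_d)$ with diagonal matrices is undecidable already in dimension two~\cite{FGH13}, and (ii) Lemma~\ref{equiv}, which gives a logarithmic-space inter-reduction between reachability and coverability that merely doubles the dimension. Since undecidability of reachability transfers to undecidability of coverability via this reduction, and the reduction turns a dimension-$2$ instance into a dimension-$4$ instance, we obtain undecidability of coverability in $\ZRM(\mathfrak{D}_4)$.

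First I would verify that the reduction construction in the proof of Lemma~\ref{equiv} can indeed be applied to $\ZRM(\mathfrak{D}_d)$, not just to $\ZVASS$ and its restrictions. The key point is that the block-diagonal construction
\begin{align*}
  A' = \begin{bmatrix} A & \vec{0} \\ \vec{0} & A \end{bmatrix}, \qquad
  \vec{b}' = \begin{vmatrix} \vec{b} \\ -\vec{b} \end{vmatrix}
\end{align*}
preserves the class of the transformation matrix: if $A$ is a diagonal matrix in dimension $d$, then $A'$ is a diagonal matrix in dimension $2d$. Hence the construction maps a $\ZRM(\mathfrak{D}_2)$ instance to a $\ZRM(\mathfrak{D}_4)$ instance, as required. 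This is exactly the ``straightforward adaption'' the excerpt alludes to, and it is the only place where one must check that the matrix class is respected.

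Next I would replay the correctness argument of Lemma~\ref{equiv} in this more general setting. The invariant $\vec{v}_i(j) = -\vec{v}_i(j+d)$ for all $j \in [d]$ is maintained along every run of $\mathcal{B}$, because the lower block applies the \emph{same} matrix $A$ and the \emph{negated} translation $-\vec{b}$; this reasoning is entirely independent of whether $A$ is the identity or an arbitrary diagonal matrix, so it carries over verbatim. Consequently, a configuration $q'(\vec{w},-\vec{w})$ covers $q'(\vec{v}',-\vec{v}')$ if and only if $\vec{w} \ge \vec{v}'$ and $-\vec{w} \ge -\vec{v}'$, forcing $\vec{w} = \vec{v}'$, so that coverability in $\mathcal{B}$ is equivalent to reachability in $\mathcal{A}$.

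The main obstacle, such as it is, is not mathematical depth but rather confirming that the cited undecidability result of~\cite{FGH13} is stated for reachability and holds already at dimension two in a model subsumed by $\ZRM(\mathfrak{D}_2)$, and that the reduction in Lemma~\ref{equiv} is genuinely a Turing (not merely many-one polynomial-time) reduction so that it transports undecidability rather than just complexity bounds. Both are immediate: Lemma~\ref{equiv} is a logarithmic-space, hence computable, reduction, and it reduces reachability to coverability, which is precisely the direction needed to conclude that coverability inherits the undecidability of reachability. I expect the whole argument to be short, essentially a one-line invocation of Lemma~\ref{equiv} together with the dimension bookkeeping $2 \mapsto 4$.
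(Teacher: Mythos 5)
Your proposal is correct and follows essentially the same route as the paper: the paper's proof likewise obtains the result by applying the dimension-doubling reduction of Lemma~\ref{equiv} to the undecidability of reachability in \ZRM$(\mathfrak{D}_2)$, and your explicit check that the block-diagonal construction preserves diagonality (so $\mathfrak{D}_2 \mapsto \mathfrak{D}_4$) is exactly the ``straight-forward adaption'' the paper relies on. The only difference is that the paper additionally reproduces Reichert's PCP-based proof of the dimension-$2$ reachability result for self-containedness (since it had not appeared in written form), whereas you invoke it as a black-box citation to~\cite{FGH13}, which is logically equivalent.
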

Of course, undecidability results for reachability in matrix
semi-groups obtained in~\cite{BP08} can be applied in order to obtain
undecidability results for more general classes of matrices, and those
undecidability results do not even require the presence of control
states.



\section{Reachability in \ZVASSr\ is in \NP}
\label{sec:zvass}

One main idea for showing that reachability for \ZVASSr \ is in \NP
\ is that since there are no constraints on the counter values along a
run, a reset on a particular counter allows to forget any information
about the value of this counter up to this point, \emph{i.e.}, a reset
cuts the run. Hence, in order to determine the value of a particular
counter at the end of a run, we only need to sum up the effect of the
operations on this counter since the last occurrence of a reset on
this counter. This in turn requires us to guess and remember the last
occurrence of a reset on each counter.

Subsequently, we introduce monitored alphabets and generalized Parikh
images in order to formalize our intuition behind resets. A
\emph{monitored alphabet} is an alphabet $\Sigma \dotcup R$ with $R =
\{r_1, \dots, r_k \}$ being the monitored letters. Given $S \subseteq
   [k]$, we denote by $\Sigma_S = \Sigma \cup \{ r_i : i\in S\}$ the
   alphabet containing only monitored letters indexed from $S$. Any
   word $\gamma \in (\Sigma \cup R)^*$ over a monitored alphabet
   admits a unique decomposition into \emph{partial words}
\begin{align*}
  \gamma = \gamma_0 r_{i_1} \gamma_1 r_{i_2} \cdots r_{i_\ell} \gamma_\ell
\end{align*}
for some $\ell\le k$ such that all $i_j$ are pairwise distinct and for
all $j \in [\ell]$, $\gamma_j \in \Sigma_{\{ r_{i_{j+1}}, \dots,
  r_{i_\ell}\}}^* $. Such a decomposition simply keeps track of the
last occurrence of each monitored letter. For instance for $k=4$ and
$\Sigma = \{a,b\}$, the word $\gamma = aabr_1br_3abr_3ar_1$ can
uniquely be decomposed as $(aabr_1br_3ab)r_3(a)r_1$.

In this paper, the \emph{Parikh image} $\pi_\Sigma(w)$ of a word $w\in
(\Sigma\dotcup R)^*$ restricted to the alphabet $\Sigma = \{a_1,
\dots, a_n\}$ is the vector $\pi_\Sigma(w) \in \nat^n$ such that
$\pi(w)(i) = |w|_{a_i}$ is the number of occurrences of $a_i$ in
$w$. Moreover, $\mathfrak{S}_k$ denotes the permutation group on $k$
symbols.


\begin{definition}
\label{def}
Let $\Sigma\dotcup R$ be a monitored alphabet such that $|\Sigma|=n$
and $|R|=k$. A tuple $(\vec{\alpha},\sigma) =
(\vec{\alpha}_0,\vec{\alpha}_1,\dots, \vec{\alpha}_k, \sigma) \in
(\nat^n)^{k+1} \times \mathfrak{S}_k$ is \emph{a generalized Parikh
  image of $\gamma \in(\Sigma \dotcup R)^*$} if there exist $0 \le p
\le k$ and a decomposition $\gamma = \gamma_p r_{\sigma(p+1)}
\gamma_{p+1}r_{\sigma(p+2)} \cdots r_{\sigma(k)} \gamma_k$ such that:
\begin{enumerate}[(a)]
\item for all $p\le i \le k$, $\gamma_i \in \Sigma_{R_i}^*$, where
  $R_i=\{r_{\sigma(i+1)}, \dots, r_{\sigma(k)}\}$; and
\item for all $0\le i < p$, $\vec{\alpha}_i = \vec{0}$ and for all $p
  \le i \le k$, $\vec{\alpha}_i = \pi_\Sigma(\gamma_i)$, the Parikh
  image of $\gamma_i$ restricted to $\Sigma$, \emph{i.e.} monitored
  alphabet symbols are ignored.
\end{enumerate}
The generalized Parikh image of a language $L\subseteq (\Sigma\dotcup
R)^*$ is the set $\Pi(L) \subseteq (\nat^n)^{k+1} \times
\mathfrak{S}_k$ of all generalized Parikh images of all words
$\gamma\in L$.
\end{definition}
This definition formalizes the intuition given by the decomposition
described above with some additional padding of dummy vectors for
monitored letters not occurring in $\gamma$ in order to obtain
canonical objects of \emph{uniform size}. Even though generalized
Parikh images are not unique, two generalized Parikh images of the
same word only differ in the order of dummy monitored letters. For
instance for $k=4$, the word $\gamma = aabr_1br_3abr_3ar_1$ has two
generalized Parikh images: they coincide on $\vec{\alpha}_0 =
\vec{\alpha}_1 = \vec{\alpha}_2 = (0,0)$, $\vec{\alpha}_3 = (3,3)$,
$\vec{\alpha}_4 = (1,0)$ and $\sigma(3) = 3$, $\sigma(4) = 1$, and
only differ on $\sigma(1)$ and $\sigma(2)$ that can be $2$ and $4$, or
$4$ and $2$, respectively.

Generalized Parikh images can now be applied to reachability in
\ZVASSr \ as follows. Without loss of generality, we may assume that a
\ZVASSr \ in dimension $d$ is given as $\mathcal{A} = (Q, \Sigma
\dotcup R, d, \Delta, \tau)$ for some alphabet $\Sigma = \{a_1, \dots,
a_n\}$ and $R=\{r_1, \dots, r_d\}$ such that $\tau(r_i) = \vec{v}
\mapsto v_{|i}$ for any $i\in [d]$ and for any $a_i\in \Sigma$,
$\tau(a_i)=\vec{v} \mapsto \vec{v} + \vec{b}_i$ for some $\vec{b}_i
\in \zint^d$. This assumption allows for isolating transitions
performing a reset and enables us to apply monitored alphabets by
monitoring when a reset occurs in each dimension the last
time. Consequently, the counter value realized by some $\gamma\in
(\Sigma \dotcup R)^*$ starting from $\vec{0}$ is fully determined by a
generalized Parikh image of $\gamma$.
%
\begin{lemma}
  \label{lem:parikh-to-effect}
  Let $\mathcal{A}$ be a \ZVASSr, $\gamma \in (\Sigma\dotcup R)^*$,
  $(\vec{\alpha}_0,\vec{\alpha}_1, \dots, \vec{\alpha}_d, \sigma)\in
  \Pi(\gamma)$ and $B\in \zint^{d\times n}$ the matrix whose columns
  are the vectors $\vec{b}_i$. Then the following holds:
  \begin{align*}
    \tau(\gamma)(\vec{0}) = \sum\nolimits_{1\le i\le d}
    (B\vec{\alpha}_{i-1})_{|\{\sigma(i), \dots, \sigma(d)\}} +
    B\vec{\alpha}_d.
  \end{align*}
\end{lemma}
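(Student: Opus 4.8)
The plan is to verify the identity counter by counter, exploiting the fact that in a \ZVASSr\ the effect of a run on a single counter is insensitive to everything happening before the last reset of that counter. First I would fix a decomposition $\gamma = \gamma_p\, r_{\sigma(p+1)}\,\gamma_{p+1}\cdots r_{\sigma(d)}\,\gamma_d$ witnessing $(\vec{\alpha}_0,\dots,\vec{\alpha}_d,\sigma)\in\Pi(\gamma)$, so that $\sigma(p+1),\dots,\sigma(d)$ are exactly the monitored letters occurring in $\gamma$, listed by the position of their last occurrence, while $\sigma(1),\dots,\sigma(p)$ are the non-occurring ones and $\vec{\alpha}_j=\vec{0}$ for $j<p$. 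I would also record that the block $\gamma_j$ contributes exactly $(B\vec{\alpha}_j)(c)$ to counter $c$, since $\vec{\alpha}_j=\pi_\Sigma(\gamma_j)$ counts the $\Sigma$-letters of $\gamma_j$ and the $c$-th row of $B$ collects their effects on counter $c$.

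The key step is the claim, proved by projecting the run onto a single counter $c=\sigma(i)$, that $\tau(\gamma)(\vec{0})(\sigma(i))=\sum_{j=i}^{d}(B\vec{\alpha}_j)(\sigma(i))$. Indeed, restricted to counter $c$ the run is a sequence of additions and possibly resets of $c$, while operations $r_{c'}$ with $c'\neq c$ leave $c$ untouched; hence the final value of $c$ is the sum of the $\Sigma$-contributions lying after its last reset, or of all of them if $c$ is never reset. If $c$ is reset (so $i>p$), this last reset is precisely the displayed $r_{\sigma(i)}$, so the relevant $\Sigma$-letters are those of $\gamma_i,\dots,\gamma_d$; here $\gamma_j\in\Sigma_{R_j}^*$ guarantees that none of these blocks contains a further $r_c$ and that their internal resets concern other counters only. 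If $c$ is never reset (so $i\le p$) every $\Sigma$-letter of $\gamma$ counts, but $\vec{\alpha}_j=\vec{0}$ for $j<p$ lets me write the same sum $\sum_{j=i}^{d}$. Unifying the two cases through this padding is where I expect the argument to need the most care.

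Finally I would evaluate the right-hand side of the statement at an arbitrary coordinate $\sigma(m)$. By definition of $\vec{v}_{|S}$, the term $(B\vec{\alpha}_{i-1})_{|\{\sigma(i),\dots,\sigma(d)\}}$ keeps its $\sigma(m)$-th entry exactly when $\sigma(m)\notin\{\sigma(i),\dots,\sigma(d)\}$, \emph{i.e.} when $i>m$; summing over $i$ and adding the separate term $B\vec{\alpha}_d$, which is exactly the boundary ``$i=d+1$'' contribution whose reset set is empty, yields $\sum_{i=m+1}^{d}(B\vec{\alpha}_{i-1})(\sigma(m))+(B\vec{\alpha}_d)(\sigma(m))$. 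Reindexing $j=i-1$ collapses this to $\sum_{j=m}^{d}(B\vec{\alpha}_j)(\sigma(m))$, matching the claim. Since $\sigma$ is a bijection on $[d]$, every coordinate has the form $\sigma(m)$, so the two vectors agree everywhere and the identity follows. The only genuine obstacle is the bookkeeping: keeping the direction of $\sigma$, the off-by-one in the reindexing, and the role of the trailing $B\vec{\alpha}_d$ term all consistent with the conventions fixed in Definition~\ref{def}.
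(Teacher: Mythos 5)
Your proof is correct, but it takes a genuinely different route from the paper's. The paper argues by induction on $j\in\{p,\dots,d\}$, establishing the stronger invariant
$\tau(\gamma_p r_{\sigma(p+1)}\cdots r_{\sigma(j)}\gamma_j)(\vec{0})_{|\{\sigma(j+1),\dots,\sigma(d)\}} = \sum_{i=0}^{j}(B\vec{\alpha}_i)_{|\{\sigma(i+1),\dots,\sigma(d)\}}$,
\emph{i.e.}, it tracks the effect of ever longer prefixes of $\gamma$ as a whole vector with the coordinates still awaiting a reset zeroed out, and concludes at $j=d$; the induction step is a short calculation in the algebra of the restriction operator, using $\tau(r_{\sigma(j)})$ and the identity $(\vec{v}_{|\sigma(j)})_{|S}=\vec{v}_{|S'}$. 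You instead avoid induction entirely by projecting onto a single counter $c=\sigma(m)$: resets of other counters leave $c$ untouched, the last reset of $c$ (if any) is exactly the displayed $r_{\sigma(m)}$ --- and you correctly invoke condition (a) of Definition~\ref{def}, namely $\gamma_j\in\Sigma_{R_j}^*$, together with injectivity of $\sigma$, to rule out any later occurrence of $r_c$ --- so the final value of $c$ is $\sum_{j= m}^{d}(B\vec{\alpha}_j)(c)$, with the padding $\vec{\alpha}_j=\vec{0}$ for $j<p$ absorbing the never-reset case $m\le p$ into the same formula. Your coordinate-wise evaluation of the right-hand side is also correct: the $i$-th summand survives at coordinate $\sigma(m)$ precisely when $i>m$, the trailing $B\vec{\alpha}_d$ is the empty-reset-set boundary term, and the reindexing $j=i-1$ closes the gap. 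As for what each approach buys: yours is more elementary and directly formalizes the ``a reset cuts the run'' intuition stated at the opening of Section~\ref{sec:zvass}, and your per-coordinate identity $\tau(\gamma)(\vec{0})(\sigma(m))=\sum_{j\ge m}(B\vec{\alpha}_j)(\sigma(m))$ is in fact the form the paper itself needs later, in the appendix proof of Lemma~\ref{lem:counters-logic}, where it reappears as $\sum_{j}\lambda_{ij}(B\vec{\alpha}_j)(i)$ to build the Presburger formula $\varphi_{\text{counters}}$; the paper's induction, by contrast, yields a purely equational verification in which the appeal to commutativity of additions is isolated once, in the base case, rather than being made implicitly for each counter separately as in your projection argument.
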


It thus remains to find a suitable way to define the generalized
Parikh image of the language of the non-deterministic finite state
automaton (NFA) underlying a \ZVASSr. In~\cite{SSMH04}, it is shown
how to construct in linear time an existential Presburger formula
representing the Parikh image of the language of an NFA. We generalize
this construction to generalized Parikh images of NFA over a monitored
alphabet, the original result being recovered in the absence of
monitored alphabet symbols, \emph{i.e.} when $k=0$. To this end, we
introduce below some definitions and two lemmas from the construction
provided in~\cite{SSMH04} which we employ for our
generalization. First, a \emph{flow} in an NFA $\mathcal{B} = (Q,
\Sigma, \Delta, q_0, F)$ is a triple $(f, s, t)$ where $s, t \in Q$
are states, and $f: \Delta \rightarrow \nat$ maps transitions $(p, a,
q) \in \Delta$ to natural numbers. Let us introduce the following
abbreviations:
\begin{align*}
\text{in}_f(q) = \sum_{(p,a,q)\in\Delta} f(p,a,q) \text{\quad and \quad} \text{out}_f(p) = \sum_{(p,a,q)\in\Delta} f(p,a,q).
\end{align*} 
A flow $(f,s,t)$ is called \emph{consistent} if for every $p\in Q$,
$\text{in}_f(p) = \text{out}_f(p) + h(p)$, where $h(p)=0$ for every
$p\in Q\setminus\{s,t\}$, and $h(s)=h(t)=0$ if $s=t$, and $h(s) = -1$
and $h(t) = 1$ otherwise. A flow is \emph{connected} if the undirected
graph obtained from the graph underlying the automaton when removing
edges with zero flow is connected. A consistent and connected flow
simply enforces Eulerian path conditions on the directed graph
underlying $\mathcal{B}$ so that any path starting in $s$ and ending
in $t$ yields a unique such flow.

\begin{lemma}[\cite{SSMH04}]
\label{lem1}
A vector $\vec{\alpha} \in \nat^n$ is in the Parikh image of
$\mathcal{L}(\mathcal{B})$ if, and only if, there is a consistent and
connected flow $(f,s,t)$ such that
\begin{itemize}
\item $s = q_0$, $t \in F$, and
\item for each $a_i\in \Sigma$, $\vec{\alpha}(i) = \sum_{(p,a_i,q)\in \Delta} f(p,a_i,q)$
\end{itemize}
\end{lemma}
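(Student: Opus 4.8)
The plan is to prove both directions by identifying accepting runs of $\mathcal{B}$ with Eulerian paths in the directed multigraph whose edges are the transitions of $\mathcal{B}$, each taken with the multiplicity prescribed by the flow $f$. For the forward implication, which is the routine one, I would start from $\vec{\alpha}$ in the Parikh image and fix an accepting run $\varrho$, i.e. a path from $q_0$ to some $t\in F$ whose word has Parikh image $\vec{\alpha}$. Defining $f(p,a,q)$ to be the number of times the transition $(p,a,q)$ is traversed along $\varrho$ and setting $s=q_0$, the Parikh-image condition is immediate, since $\vec{\alpha}(i)$ counts occurrences of $a_i$, which is exactly the total flow across $a_i$-labelled transitions. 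Consistency follows from the elementary fact that along a walk every visit to an intermediate state contributes equally to its in- and out-flow, whereas the source has one surplus departure and the sink one surplus arrival (and neither when $s=t$), matching the definition of $h$. Connectedness holds because all edges carrying positive flow lie on the single walk $\varrho$, so the associated undirected subgraph is connected.

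The backward implication is where the real work lies, and it rests on the directed analogue of Euler's theorem. Given a consistent, connected flow $(f,s,t)$ with $s=q_0$ and $t\in F$, I would consider the multigraph $G_f$ containing $f(p,a,q)$ parallel copies of each edge $(p,a,q)$. Consistency says precisely that in $G_f$ every vertex has equal in- and out-degree, except that $s$ has out-degree one larger and $t$ in-degree one larger (or perfect balance everywhere when $s=t$); connectedness says the support of $G_f$ is weakly connected. These are exactly the classical necessary-and-sufficient conditions under which a directed multigraph admits an Eulerian path from $s$ to $t$ (an Eulerian circuit through $s$ when $s=t$). Reading the labels along such a path yields a word $\gamma$ accepted by $\mathcal{B}$, since the path starts in $q_0$ and ends in $t\in F$, and because the path uses each edge $(p,a,q)$ exactly $f(p,a,q)$ times, the number of $a_i$'s in $\gamma$ equals $\sum_{(p,a_i,q)\in\Delta} f(p,a_i,q) = \vec{\alpha}(i)$. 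Hence $\vec{\alpha}$ is the Parikh image of $\gamma\in\mathcal{L}(\mathcal{B})$.

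The main obstacle is the careful invocation of the Eulerian-path theorem together with the boundary cases: one must verify that the degree-balance and weak-connectivity conditions encoded by \emph{consistent} and \emph{connected} coincide with the classical conditions for Eulerian traversability, and treat the $s=t$ case (an Eulerian circuit, including the degenerate empty-flow case realizing the empty word $\epsilon$ when $q_0\in F$) separately from the $s\neq t$ case. A minor subtlety is that the definition of connectedness constrains only the edges with nonzero flow, so I would check that vertices with no incident flow do not obstruct the construction; this is immediate, as such vertices are simply never visited by the path.
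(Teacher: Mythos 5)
Your proof is correct and is precisely the argument the paper intends: the paper offers no proof of this lemma (it is imported from Seidl \emph{et al.}~\cite{SSMH04}) and only remarks that consistency and connectedness ``enforce Eulerian path conditions,'' which is exactly your reduction --- traversal multiplicities of an accepting run in one direction, and the directed Eulerian-path theorem applied to the multigraph with $f(p,a,q)$ parallel edges in the other. Your treatment of the boundary cases ($s=t$ as an Eulerian circuit, the empty flow realizing $\epsilon$ when $q_0\in F$, and states carrying no flow being irrelevant to connectedness) fills in the details the citation leaves implicit, so nothing further is needed.
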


Subsequently, in order to conveniently deal with states and alphabet
symbols in Presburger arithmetic, we write $Q = \{ \tilde{1}, \dots,
\tilde{m} \}$, $\Sigma = \{\dot{1}, \dots, \dot{n}\}$ and $R =
\{\dot{(n+1)},\dots,\dot{(n+k)}\}$. This enables us to write within
the logic terms like $p = q$ for $\tilde{p}, \tilde{q} \in
Q$. Moreover, it is easy to construct a formula $\varphi_\Delta(p, a,
q)$ such that $\varphi_\Delta(p,a,q)$ holds if, and only if,
$(\tilde{p}, \dot{a}, \tilde{q}) \in \Delta$. In particular,
$\varphi_\Delta$ can be constructed in linear time, independently of
the encoding of the NFA and its graph structure. With this encoding,
it is not difficult to see how the conditions from Lemma \ref{lem1}
can be checked by an existential Presburger formula.
\begin{lemma}[\cite{SSMH04}]
\label{lem2}
There exists a linear-time computable existential Presburger formula
$\varphi_\mathcal{B}(\vec{f},s,t)$ with the following properties:
\begin{itemize}
\item $\vec{f}$ represents a flow, \emph{i.e.}, is a tuple of
  variables $x_{(p,a,q)}$ for each $(p,a,q)\in\Delta$;
\item $s$ and $t$ are free variables constrained to represent states
  of $Q$; and
\item $(m_{\delta_1}, \dots, m_{\delta_g}, m_s, m_t)\in
  \eval{\varphi_\mathcal{B}(\vec{f},s,t)}$ if, and only if, the flow
  $(f_m,\tilde{m_s},\tilde{m_t})$ defined by $f_m(\delta_i) = m_{\delta_i}$ is 
  connected and consistent in $\mathcal{B}$.
\end{itemize}
\end{lemma}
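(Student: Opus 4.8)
The plan is to construct the existential Presburger formula $\varphi_{\mathcal{B}}(\vec{f}, s, t)$ by directly encoding the two conditions from the definition of a consistent and connected flow. The variables are fixed in advance: one flow variable $x_{(p,a,q)}$ for each transition $(p,a,q) \in \Delta$ (collectively $\vec{f}$), plus the two state variables $s$ and $t$. First I would write the constraint forcing $s$ and $t$ to range over valid state encodings, i.e. $1 \le s \le m$ and $1 \le t \le m$, using the numbering $Q = \{\tilde{1}, \dots, \tilde{m}\}$ introduced above. Then I would encode consistency: for each state $\tilde{q} \in Q$ the in-flow must equal the out-flow plus the source/sink correction $h(q)$. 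Since the $h$ function depends on whether $q$ equals $s$, $t$, or neither, and on whether $s = t$, I would express $h$ arithmetically using the already-available equality predicate on state variables, writing something like a term that evaluates to $-1$ when $q = s$ and $s \neq t$, to $+1$ when $q = t$ and $s \neq t$, and to $0$ otherwise.

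The main obstacle is that the quantities $\mathrm{in}_f(q)$ and $\mathrm{out}_f(q)$, as well as the correction $h(q)$, are naturally indexed by a state $q$ that is itself a variable rather than a constant, so I cannot simply write one inequality per state and be done --- I must range over all states while letting $s,t$ be existentially chosen. The clean way to handle this is to note that $Q$ is a fixed finite set of known size $m$, so I can take an explicit finite conjunction over the concrete states $\tilde{q} \in \{\tilde{1}, \dots, \tilde{m}\}$. For each such constant state $\tilde{q}$, the in-flow $\mathrm{in}_f(\tilde{q}) = \sum_{(p,a,\tilde{q}) \in \Delta} x_{(p,a,\tilde{q})}$ and the analogous out-flow are honest linear terms over the flow variables, and the correction becomes a Boolean combination of the atomic tests $q = s$, $q = t$, $s = t$ (with $q$ now the constant $\tilde{q}$), each of which is expressible by $\varphi_\Delta$-style state comparisons. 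The consistency clause is thus a finite conjunction of linear equalities guarded by state-equality disjunctions, all within $\Sigma_1$.

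For connectedness, I would follow the standard Seidl \emph{et al.} technique: connectivity of the subgraph of nonzero-flow edges is captured by a \emph{spanning-tree} / \emph{distance-labelling} certificate. Concretely, one introduces auxiliary existentially quantified variables assigning to each state a natural-number ``level'' and requires that every state carrying nonzero incident flow is reachable from the source $s$ along edges with positive flow, enforced by demanding that each such state other than $s$ has an incoming positive-flow edge from a strictly lower level. This reachability-by-levels encoding is linear in $|\Delta|$ and stays within the existential fragment since the levels are existentially quantified. Combining the state-range constraint, the consistency conjunction, and the connectedness certificate gives the desired $\varphi_{\mathcal{B}}(\vec{f}, s, t)$, and each component is manifestly computable in linear time in the size of $\Delta$, since the number of conjuncts is proportional to $|Q| + |\Delta|$; finally I would verify the stated equivalence by checking that a valuation $(m_{\delta_1}, \dots, m_{\delta_g}, m_s, m_t)$ satisfies $\varphi_{\mathcal{B}}$ exactly when the induced flow is consistent and connected, which is immediate from the construction.
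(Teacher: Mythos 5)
Your construction is correct and matches the approach of the cited source \cite{SSMH04}, which the paper itself does not reprove: consistency as a finite conjunction of linear flow-conservation equalities per concrete state (with the $h$-correction expressed through equality tests on $s$ and $t$), and connectedness via existentially quantified level variables certifying reachability from $s$ along positive-flow edges. The only point worth making explicit is that the level certificate encodes \emph{directed} reachability, which coincides with the undirected connectedness in the lemma's definition only in conjunction with consistency (via the Eulerian-path decomposition) --- but since your formula conjoins the two conditions anyway, the stated equivalence holds.
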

We can now show how to generalize the construction from~\cite{SSMH04}
to monitored alphabets and generalized Parikh images.  Subsequently,
recall that $k$ is the number of monitored letters.

\begin{theorem}
\label{gpi}
Given an NFA $\mathcal{B} = (Q, \Sigma \dotcup R, \Delta, \tilde{q_0},
F)$ over a monitored alphabet $\Sigma \dotcup R$, an existential
Presburger formula $\Psi_\mathcal{B}(\vec{\alpha}, \vec{\sigma})$
defining the generalized Parikh image of the language
${\mathcal{L}(B)}$ of $\mathcal{B}$ can be constructed in time
$O(k^2|\mathcal{B}|)$.
\end{theorem}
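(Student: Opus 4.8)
The plan is to generalize the flow-based encoding of Seidl \emph{et al.} (Lemmas~\ref{lem1} and~\ref{lem2}) by \emph{chaining} several flows, one per partial word in the canonical decomposition of Definition~\ref{def}. Recall that a generalized Parikh image is determined by a split point $p$, a permutation $\sigma$ recording the order of last occurrences of monitored letters, and a decomposition $\gamma = \gamma_p\, r_{\sigma(p+1)}\gamma_{p+1}\cdots r_{\sigma(k)}\gamma_k$ in which each $\gamma_i$ ranges over the restricted alphabet $\Sigma_{R_i}$. I would read each separator $r_{\sigma(i+1)}$ as a single transition of $\mathcal{B}$ that links the end of the sub-run for $\gamma_i$ to the start of the sub-run for $\gamma_{i+1}$, and each $\gamma_i$ as a sub-path of $\mathcal{B}$ using only $\Sigma_{R_i}$-transitions, captured by its own flow. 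The formula $\Psi_\mathcal{B}$ then asserts the existence of such a chain whose flows are individually consistent and connected and whose extracted $\Sigma$-counts are the $\vec{\alpha}_i$.

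Concretely, I would first encode $\sigma$ together with its inverse $\sigma^{-1}$ by $2k$ variables with the usual range and pairwise-distinctness constraints, at cost $O(k^2)$; keeping $\sigma^{-1}$ explicit lets me test membership $r_j\in R_i$ by the single comparison $\sigma^{-1}(j)>i$. The body of $\Psi_\mathcal{B}$ is a disjunction over the split point $p\in\{0,\ldots,k\}$ (needed because $p$ controls how many slots are active and which one starts in $\tilde{q_0}$, and Presburger arithmetic cannot index a variable by a variable). For a fixed $p$ I introduce flow variables $\vec{f}_p,\ldots,\vec{f}_k$ with endpoint variables $s_i,t_i$ and impose: (i) one copy of $\varphi_\mathcal{B}(\vec{f}_i,s_i,t_i)$ per slot, making each flow consistent and connected; (ii) the alphabet restriction, namely for every monitored transition $(q,r_j,q')$ the guard $f_i(q,r_j,q')=0\lor\sigma^{-1}(j)>i$, forcing $\gamma_i$ to stay within $\Sigma_{R_i}$; (iii) the counting constraint $\vec{\alpha}_i(\ell)=\sum_{(q,a_\ell,q')\in\Delta}f_i(q,a_\ell,q')$ for each $a_\ell\in\Sigma$, and $\vec{\alpha}_i=\vec{0}$ for $i<p$; and (iv) the chaining $s_p=\tilde{q_0}$, the separator condition $\varphi_\Delta(t_i,\,n+\sigma(i+1),\,s_{i+1})$ for $p\le i<k$, and $t_k\in F$.

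For soundness, given a satisfying assignment I would invoke Lemma~\ref{lem1} to realize each flow $\vec{f}_i$ as an actual sub-path of $\mathcal{B}$ from $s_i$ to $t_i$ labelled by a word $\gamma_i$ over $\Sigma_{R_i}$ with $\pi_\Sigma(\gamma_i)=\vec{\alpha}_i$, and splice these sub-paths with the separator transitions into a single accepting run on $\gamma=\gamma_p\,r_{\sigma(p+1)}\gamma_{p+1}\cdots r_{\sigma(k)}\gamma_k$. Completeness is the reverse: the canonical decomposition of any $\gamma\in\mathcal{L}(\mathcal{B})$ cuts its accepting run at the last occurrences of the monitored letters, and by Lemma~\ref{lem1} each piece induces a consistent connected flow over the appropriate restricted alphabet, from which the assignment is read off.

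The step I expect to require the most care is verifying that the construction captures the \emph{last}-occurrence semantics of Definition~\ref{def} rather than an arbitrary factorization at some occurrences. The crucial point is that the restrictions $R_i=\{r_{\sigma(i+1)},\ldots,r_{\sigma(k)}\}$ are nested: since $r_{\sigma(i+1)}\notin R_j$ for every $j\ge i+1$, the letter $r_{\sigma(i+1)}$ cannot appear in any later partial word, so the separator placed between $\gamma_i$ and $\gamma_{i+1}$ is genuinely its final occurrence; symmetrically, $r_{\sigma(1)},\ldots,r_{\sigma(p)}$ occur in no $R_j$ and are therefore absent from $\gamma$, matching the dummy padding $\vec{\alpha}_i=\vec{0}$. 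Finally, for the size bound I would observe that the permutation encoding costs $O(k^2)$, while the disjunct for split point $p$ contains $k-p+1$ copies of $\varphi_\mathcal{B}$ together with as many separator and counting blocks, each of size $O(|\mathcal{B}|)$; summing $\sum_{p=0}^{k}(k-p+1)$ over the disjunction yields the claimed $O(k^2|\mathcal{B}|)$ bound.
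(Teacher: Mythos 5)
Your proof is correct and follows essentially the same route as the paper: one consistent and connected flow (via Lemma~\ref{lem2}) per partial word, chained by separator transitions $\varphi_\Delta(t_{i-1}, n+\sigma_i, s_i)$, with zero flows and $\vec{\alpha}_i=\vec{0}$ for the dummy slots $i<p$, an $O(k^2)$ permutation encoding, and soundness/completeness by translating between flows and accepting runs through Lemma~\ref{lem1}. The only cosmetic differences are that you case-split on the cut point $p$ by an explicit disjunction and carry $\sigma^{-1}$ as extra variables, whereas the paper keeps $p$ as a single existentially quantified variable and guards each conjunct with the comparisons $i \le p$ and $p < i$; both versions give the claimed $O(k^2|\mathcal{B}|)$ size bound.
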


\begin{proof}
  The formula we construct has free variables $\alpha_0^1, \dots,
  \alpha_0^n, \alpha_1^1, \dots, \alpha_k^n$ representing the $k+1$
  vectors $\vec{\alpha_0}, \dots, \vec{\alpha_k}$ and free variables
  $\vec{\sigma}=(\sigma_1, \dots, \sigma_k)$ to represent the
  permutation $\sigma$. First, we construct a formula
  $\varphi_{\text{perm}}$ expressing that $\vec{\sigma}$ is a
  permutation from $[k]$ to $[k]$:
\begin{align*}
  \varphi_{\text{perm}}(\vec{\sigma}) = \bigwedge_{i\in [k]}\left( 1 \le \sigma_i \le k \ \wedge
  \bigwedge\nolimits_{j \in [k]} i \neq j \rightarrow \sigma_i \neq \sigma_j \right).
\end{align*}
This formula has already size $O(k^2)$. Now we have to compute the
flow for each of the $k+1$ parts of the runs corresponding to the
$k+1$ partial words, but first we have to ``guess'' the starting and
ending states of each of these partial runs, in order to use the
formula from Lemma~\ref{lem2}. Let $\vec{s}=(s_0,\ldots,s_k)$ and
$\vec{t}=(t_0,\ldots,t_k)$, we define
\begin{multline*}
  \varphi_{\text{states}}(\vec{\sigma}, p, \vec{s}, \vec{t}) = s_0 = q_0 \wedge 
  \bigvee\nolimits_{\tilde{q}\in F} t_k = q \wedge
  \\ \bigwedge_{i \in [k]} [i \le p \rightarrow s_{i-1} = t_{i-1} \wedge t_{i-1}
  = s_i] \wedge [p < i \rightarrow \varphi_\Delta(t_{i-1},
  n + \sigma_i, s_i)].
\end{multline*}
Here, $p$ is used as in Definition~\ref{def}. We can now express the
$k+1$ flows: we need one variable per transition for each partial run.
\begin{multline*}
  \varphi_\text{flows}(\vec{\sigma}, p, \vec{f}, \vec{s}, \vec{t}) = 
  \bigwedge_{0\le i\le k} i < p
  \rightarrow \sum_{(p,a,q) \in \Delta} x_{(p,a,q)}^i = 0 \wedge \\
  \wedge \bigwedge_{0\le i\le
    k} p \le i \rightarrow \left( \varphi_{\mathcal{B}}(\vec{f}_i,s_i,
  t_i,) \wedge
  \bigwedge_{1\le j < i} \bigwedge_{(p,\dot{a}, q) \in\Delta} a = n + \sigma_j 
  \rightarrow x_{(p,\dot{a},q)}^i = 0\right),
\end{multline*}
where $\vec{f} = (\vec{f}_0, \dots, \vec{f}_k)$ and $\vec{f}_i$ is the
tuple of free variables of the form $x_{(p,a,q)}^i$ for all
$(p,a,q)\in\Delta$. This formula essentially enforces the constraints
from Definition~\ref{def}. The first line enforces that the ``dummy
flows'' $\vec{f}_0,\ldots,\vec{f}_{p-1}$ have zero flow. The second
line ensures that the flows $\vec{f}_p,\ldots,\vec{f}_k$ actually
correspond to partial words $\gamma_i$ in the decomposition described
in Definition~\ref{def}, and that monitored letters that, informally
speaking, have expired receive zero flow. Now putting everything
together yields:
\begin{multline*}
  \Psi_\mathcal{B}(\vec{\alpha}, \vec{\sigma}) = \exists p,\vec{f}_0,\ldots
  \vec{f}_k,\vec{s},\vec{t}.\,
  0 \le p \le k \wedge
  \varphi_\text{perm}(\vec{\sigma}) \wedge\\ \wedge
  \varphi_\text{states}(\vec{\sigma}, p, \vec{s}, \vec{t}) \wedge
  \varphi_\text{flows}(\vec{\sigma}, p, \vec{f}, \vec{s}, \vec{t}) 
  \wedge \bigwedge_{0
    \le i \le k} \bigwedge_{a \in [n]} \alpha_i^a = \sum_{(p,\dot{a},q)\in\Delta}
  x_{(p,\dot{a},q)}^i.
\end{multline*}
The size of $\Psi_\mathcal{B}(\vec{\alpha}, \vec{\sigma})$ is
dominated by the size of $\varphi_{\text{flows}}(\vec{\sigma}, p,
\vec{f}, \vec{s}, \vec{t})$ which is $O(k^2|\mathcal{B}|)$. \qed
\end{proof}
Note that it is easy to modify $\Psi_\mathcal{B}$ in order to have
$q_0$ as a free variable. By combining $\Psi_\mathcal{B}$ with
Lemma~\ref{lem:parikh-to-effect}, we obtain the following corollary.
%
\begin{corollary}
  \label{cor:reachability-presburger}
  Let $\mathcal{A}$ be a \ZVASSr \ and $p,q\in Q$. There exists a
  logarithmic-space computable existential Presburger
  formula\footnote{Here, we allow $\vec{v}$ and $\vec{w}$ to be
    interpreted over $\zint$, which can easily be achieved by
    representing an integer as the difference of two natural numbers.}
  $\Phi_{\mathcal{A}}(p,q,\vec{v},\vec{w},\vec{\alpha},\vec{\sigma})$
  such that $(p,q,\vec{v},\vec{w},\vec{\alpha},\vec{\sigma})\in
  \eval{\Phi_{\mathcal{A}}}$ if, and only if, there is $\gamma\in
  (\Sigma\dotcup R)^*$ such that
  %
    $\tilde{p}(\vec{v}) \stackrel{\gamma}{\rightarrow}_\mathcal{A}
    \tilde{q}(\vec{w})$
  and $(\vec{\alpha}, \sigma)\in \Pi(\gamma)$, where $\sigma(i)=\vec{\sigma}(i)$.
\end{corollary}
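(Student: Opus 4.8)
The plan is to combine Theorem~\ref{gpi} with Lemma~\ref{lem:parikh-to-effect}. As a starting point I would take the NFA $\mathcal{B}$ underlying $\mathcal{A}$ over the monitored alphabet $\Sigma \dotcup R$ with $R=\{r_1,\dots,r_d\}$ (so $k=d$) and invoke Theorem~\ref{gpi} to obtain $\Psi_\mathcal{B}(\vec{\alpha},\vec{\sigma})$. As remarked after Theorem~\ref{gpi}, the initial state can be promoted to a free variable; doing the same for the final state (replacing the disjunction $\bigvee_{\tilde{q}\in F} t_k = q$ inside $\varphi_{\text{states}}$ by the single constraint $t_k = q$) yields, for two free state variables $p,q$, a formula asserting the existence of a word $\gamma$ labelling a run $\tilde{p}(\cdot)\stackrel{\gamma}{\rightarrow}_\mathcal{A}\tilde{q}(\cdot)$ together with $(\vec{\alpha},\sigma)\in\Pi(\gamma)$. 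To avoid a name clash with the state $p$, I rename the existentially quantified decomposition parameter of Theorem~\ref{gpi} to $\hat{p}$.

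It then remains to append constraints forcing $\tau(\gamma)(\vec{v})=\vec{w}$. Here I would first observe that every transition of a \ZVASSr\ is affine with a diagonal $0/1$ linear part, so $\tau(\gamma)$ is affine and $\tau(\gamma)(\vec{v}) = M\vec{v} + \tau(\gamma)(\vec{0})$, where $M$ is the diagonal $0/1$ matrix with $M(\ell,\ell)=1$ exactly when $r_\ell$ does not occur in $\gamma$. Lemma~\ref{lem:parikh-to-effect} already expresses $\tau(\gamma)(\vec{0})$ from $(\vec{\alpha},\sigma)$, while the correction $M\vec{v}$ simply adds $\vec{v}(\ell)$ to counter $\ell$ precisely when $\ell$ is never reset, i.e.\ when $\ell\in\{\sigma(1),\dots,\sigma(\hat{p})\}$. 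The target constraint is therefore, componentwise for each $\ell\in[d]$,
\[
  \vec{w}(\ell) = \sum_{i=1}^{d}\bigl(B\vec{\alpha}_{i-1}\bigr)_{|\{\sigma(i),\dots,\sigma(d)\}}(\ell) + (B\vec{\alpha}_d)(\ell) + [\,\ell\ \text{not reset}\,]\cdot\vec{v}(\ell),
\]
where $[\,\cdot\,]$ denotes an indicator.

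The main obstacle is that the index sets $\{\sigma(i),\dots,\sigma(d)\}$ governing the restrictions, as well as the predicate ``$\ell$ not reset'', depend on the permutation variables $\vec{\sigma}$ and on $\hat{p}$, and hence cannot be written as literal coordinate restrictions in the logic. The point that makes everything expressible in existential PA is that the dimension indices $i,\ell$ and the permutation positions $j$ all range over the fixed finite set $[d]$: thus ``$\ell\notin\{\sigma(i),\dots,\sigma(d)\}$'' unfolds into the quantifier-free formula $\bigwedge_{j=i}^{d}\sigma_j\neq\ell$, and ``$\ell$ not reset'' into $\bigvee_{j=1}^{d}(j\le\hat{p}\wedge\sigma_j=\ell)$. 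Each coordinate term $(B\vec{\alpha}_i)(\ell)=\sum_{a}\vec{b}_a(\ell)\,\alpha_i^a$ is linear in the $\vec{\alpha}$-variables with integer coefficients, and each guarded product (an indicator times such a term, or times $\vec{v}(\ell)$) is realized by a fresh existentially quantified variable $y$ constrained by $(\text{guard}\to y=\text{term})\wedge(\neg\,\text{guard}\to y=0)$.

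Finally I would assemble $\Phi_\mathcal{A}$ as the conjunction of the modified $\Psi_\mathcal{B}$ with these $O(d^2)$ guarded equalities, existentially quantifying all auxiliary variables and representing the integer-valued $\vec{v},\vec{w}$ as differences of naturals as in the footnote. Correctness in both directions is immediate from Theorem~\ref{gpi} (which characterizes exactly the generalized Parikh images of $p$-to-$q$ runs) and Lemma~\ref{lem:parikh-to-effect} together with the affine-decomposition identity above. The resulting formula is existential, its size is dominated by that of $\Psi_\mathcal{B}$, namely $O(d^2|\mathcal{B}|)$, and it is plainly logarithmic-space computable, which gives the claim.
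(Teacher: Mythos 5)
Your proposal is correct and takes essentially the same route as the paper: the paper likewise conjoins $\Psi_{\mathcal{B}}$ from Theorem~\ref{gpi} (with the decomposition parameter $p$ left un-quantified so it can be shared, and the initial/final states promoted to free variables) with an existential formula $\varphi_{\text{counters}}$ realizing Lemma~\ref{lem:parikh-to-effect} together with the correction for the never-reset coordinates of $\vec{v}$ (its auxiliary vector $\vec{\nu}$, with $\vec{\nu}(i)=\vec{v}(i)$ exactly when $\sigma^{-1}(i)\le p$, is precisely your $M\vec{v}$). The only cosmetic difference is how the $\sigma$-dependent products are linearized: the paper introduces telescoping partial-sum variables $\beta_j^i$ guarded by conditions $\sigma(k)=i$, whereas you use per-term indicator-guarded fresh variables with explicitly unfolded guards --- equivalent devices with the same complexity and the same correctness argument.
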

In particular, this implies that the reachability set of \ZVASSr \ is
semi-linear, and that reachability in \ZVASSr \ is \NP-complete.

\section{Inclusion for \ZVASS}
\label{sec:inclusion}

In this section, we show the following theorem.

\begin{theorem}
  Inclusion for \ZVAS \ is $\NP$-hard and in $\mathsf{\Pi}_2^\P$,
  and \coNEXP-complete for \ZVASS \ and \ZVASSr.
\end{theorem}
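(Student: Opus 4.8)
The plan is to prove the four assertions separately, using Corollary~\ref{cor:reachability-presburger} for the upper bounds and Lemma~\ref{lem:np-hardness} together with a reduction from $\Pi_2$-Presburger validity for the lower bounds.

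For the \coNEXP\ upper bound for \ZVASSr\ (and \emph{a fortiori} \ZVASS), I would first turn each reachability set into an existential Presburger formula. Given $\mathcal{A}$ started in $q(\vec{v})$, Corollary~\ref{cor:reachability-presburger} lets me build in logarithmic space an existential formula $\psi_\mathcal{A}(\vec{w})$ — obtained by existentially quantifying the target state, the generalized-Parikh-image variables $\vec{\alpha},\vec{\sigma}$ and the internal variables of $\Phi_\mathcal{A}$, and disjoining over the finitely many target states — with $\eval{\psi_\mathcal{A}}=\mathit{reach}(\mathcal{A},q(\vec{v}))$, and similarly $\psi_\mathcal{B}$ for $\mathcal{B},p(\vec{w})$. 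Inclusion is then validity of $\forall\vec{w}.\,\psi_\mathcal{A}(\vec{w})\rightarrow\psi_\mathcal{B}(\vec{w})$; prenexing turns the existential quantifiers of the antecedent into universal ones and leaves those of the consequent existential, so I obtain a $\Pi_2$-sentence of polynomial size. Since validity of the $\Pi_2$-fragment of PA is in \coNEXP~\cite{Haa14}, membership follows.

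For the matching lower bound, which should already hold for \ZVASS, I would reduce from validity of a \coNEXP-hard $\Pi_2$-sentence $\Phi=\forall\vec{x}\,\exists\vec{y}.\,\phi(\vec{x},\vec{y})$. I take $\mathcal{A}$ to be the single-state system whose letters add $\vec{e}_1,\dots,\vec{e}_k$, so $\mathit{reach}(\mathcal{A},\vec{0})=\nat^k$, and build a polynomial-size \ZVASS\ $\mathcal{B}$ over the same $k$ coordinates with $\mathit{reach}(\mathcal{B},\vec{0})\cap\nat^k=\{\vec{x}\in\nat^k:\exists\vec{y}.\,\phi(\vec{x},\vec{y})\}$, using the folklore converse of the correspondence behind Corollary~\ref{cor:reachability-presburger}: every existential-Presburger-definable set is realizable as a polynomial-size \ZVASS\ reachability set, where disjunctions in $\phi$ become branches of the controller, inequalities are made into equalities by slack letters, and the conjunction of the constraints is enforced compactly by the flow-conservation equations rather than by expanding into generators. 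Crucially this needs no resets. Comparing only on $\nat^k$ (legitimate since $\mathit{reach}(\mathcal{A})\subseteq\nat^k$), inclusion holds iff $\nat^k\subseteq\mathit{reach}(\mathcal{B})$, i.e.\ iff $\{\vec{x}:\exists\vec{y}.\,\phi\}=\nat^k$, i.e.\ iff $\Phi$ is valid; as \ZVASS\ is a subclass of \ZVASSr, both are \coNEXP-complete.

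The two \ZVAS\ bounds remain. \NP-hardness is immediate: membership $\vec{w}\in\mathit{reach}(\mathcal{B},\vec{v})$ is exactly \ZVAS-reachability, which is \NP-hard by Lemma~\ref{lem:np-hardness}, and taking $\mathcal{A}$ to be a single state with no transitions started in $\vec{w}$ gives $\mathit{reach}(\mathcal{A},\vec{w})=\{\vec{w}\}$, so inclusion of $\mathcal{A}$ into $\mathcal{B}$ is equivalent to that membership. For the $\mathsf{\Pi}_2^\P$ upper bound I exploit that each \ZVAS\ reachability set is a \emph{single} linear set $L=\vec{v}+\{B\vec{n}:\vec{n}\ge\vec{0}\}$, so non-inclusion $L_\mathcal{A}\not\subseteq L_\mathcal{B}$ asks for a point $\vec{p}\in L_\mathcal{A}\setminus L_\mathcal{B}$. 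I place this in $\mathsf{\Sigma}_2^\P$ by existentially guessing $\vec{n}$ (hence $\vec{p}=\vec{v}_\mathcal{A}+B_\mathcal{A}\vec{n}$) and verifying $\vec{p}\notin L_\mathcal{B}$, the latter being the complement of integer feasibility of $B_\mathcal{B}\vec{m}=\vec{p}-\vec{v}_\mathcal{B},\vec{m}\ge\vec{0}$ and thus in \coNP; negating yields $\mathsf{\Pi}_2^\P$.

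The step I expect to be the main obstacle is precisely the witness bound in this last argument: I must show that if $L_\mathcal{A}\not\subseteq L_\mathcal{B}$ then a counterexample $\vec{p}$ exists with only polynomially many bits, since otherwise the existential guess is not in $\mathsf{\Sigma}_2^\P$. I would derive this from small-solution bounds for (in)feasibility of integer linear systems together with the eventual periodicity of $L_\mathcal{B}$ — its complement is semilinear by~\cite{GS66} with single-exponentially bounded moduli and threshold — so that any counterexample can be shifted by periods of $L_\mathcal{B}$ down to one of polynomial magnitude while staying inside $L_\mathcal{A}$. This is exactly the place where the single-linear-set structure of \ZVAS\ buys the drop from \coNEXP\ to $\mathsf{\Pi}_2^\P$, and it is also where the \coNEXP-hardness construction for \ZVASS\ is blocked, since there the reachability set is a union of exponentially many linear sets.
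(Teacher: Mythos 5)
Your upper bounds are essentially the paper's: the \coNEXP\ membership argument for \ZVASSr\ (build $\Sigma_1$-PA formulas for the two reachability sets via Corollary~\ref{cor:reachability-presburger}, prenex the inclusion into a $\Pi_2$-PA sentence, invoke~\cite{Haa14}) is exactly what the paper does, and your singleton-reach-set reduction for \NP-hardness of \ZVAS\ inclusion is a valid variant of the paper's remark. The genuine gap is in the \coNEXP-hardness construction. It rests on the claim that $\{\vec{x}\in\nat^k:\exists\vec{y}.\,\phi(\vec{x},\vec{y})\}$ is realizable \emph{exactly} as $\mathit{reach}(\mathcal{B},p(\vec{0}))\cap\nat^k$ for a polynomial-size \ZVASS\ $\mathcal{B}$ over the same $k$ coordinates, and your reduction really needs the ``no junk'' direction of this exactness: any spurious natural point in $\mathit{reach}(\mathcal{B})$ could make $\nat^k\subseteq\mathit{reach}(\mathcal{B})$ hold even when $\Phi$ is invalid. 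But this exact-realization claim is false as stated: by definition the reachability set ranges over \emph{all} control states, so it always contains $\vec{0}$ and the vector of every intermediate configuration of every run (already $\{x:\exists y.\, x\ge 1\}$ is not a reach set), and with only the $k$ visible coordinates there is no way to mark intermediate configurations as invalid. The proposed mechanism for conjunctions does not work either: flow-conservation constraints are network-matrix relations with coefficients in $\{-1,0,1\}$ on letter multiplicities, so the binary-encoded coefficients of the terms cannot be enforced by the graph structure in polynomial size; they must be evaluated in auxiliary counters, and a reachability \emph{set} cannot demand that auxiliary counters be zero at the end --- that is precisely the zero test that $\zint$-semantics loses.

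The paper's construction supplies the idea you are missing: it never realizes the solution set exactly. After normalizing $\varphi$ into a positive Boolean combination of terms $t_i=\vec{a}_i\cdot\vec{x}+z_i\ge\vec{b}_i\cdot\vec{y}$, it uses one counter \emph{per term}: $\mathcal{A}$ reaches the left-hand-side evaluations $\vec{z}+\sum_j x_j\vec{\ell}_j$ over all $\vec{x}\in\nat^m$, while $\mathcal{B}$ first guesses $\vec{y}$ (loading the right-hand sides), simulates the Boolean structure by sequential composition (conjunction) and branching (disjunction) with decrement loops that neutralize the counters of deselected disjuncts, and finally has increment loops at $p_f$ exploiting the direction of $\ge$, so that a point of $\mathit{reach}(\mathcal{A})$ lies in $\mathit{reach}(\mathcal{B})$ if, and only if, some $\vec{y}$ and choice of disjuncts satisfies the selected terms. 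All junk in $\mathit{reach}(\mathcal{B})$ is harmless because inclusion quantifies only over points of $\mathit{reach}(\mathcal{A})$ --- this ``monotone matching on term evaluations'' replaces your exact realization, and one can indeed patch your sketch into this shape (auxiliary term counters plus a sentinel coordinate, required to be zero only at points of $\mathit{reach}(\mathcal{A})$), but that is essentially rederiving the paper's gadget. Finally, for the $\mathsf{\Pi}_2^\P$ upper bound for \ZVAS, the polynomial-bit counterexample bound you flag as the main obstacle is a genuine theorem you would still have to prove; the paper sidesteps it entirely by observing that \ZVAS\ reach sets are linear sets and citing Huynh's $\mathsf{\Pi}_2^\P$-completeness of inclusion for semi-linear sets~\cite{Huy86}.
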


The upper bounds follow immediately from the literature. For \ZVAS
\ we observe that we are asking for inclusion between linear
sets. Huynh~\cite{Huy86} shows that inclusion for semi-linear sets is
$\mathsf{\Pi}_2^\P$-complete, which yields the desired upper
bound. Regarding inclusion for \ZVASSr, from
Corollary~\ref{cor:reachability-presburger} we have that the
reachability set of a \ZVASSr \ is $\Sigma_1$-PA definable. Let
$\mathcal{A},\mathcal{B}$ be \ZVASSr \ in dimension $d$,
$q(\vec{v})\in C(\mathcal{A})$, $p(\vec{w})\in C(\mathcal{B})$, and
let $\phi_{\mathcal{A},q(\vec{v})}(\vec{x})$ and
$\phi_{\mathcal{B},p(\vec{w})}(\vec{x})$ be appropriate $\Sigma_1$-PA
formulas from Corollary~\ref{cor:reachability-presburger} with
$\vec{x}=(x_1,\ldots,x_d)$. We have
\begin{align*}
\mathit{reach}(\mathcal{A},q(\vec{v}))\subseteq
\mathit{reach}(\mathcal{B},p(\vec{w})) \Leftrightarrow \neg( \exists
\vec{x}. \phi_{\mathcal{A},q(\vec{v})}(\vec{x}) \wedge \neg
(\phi_{\mathcal{B},p(\vec{w})}(\vec{x}))) \text{ is valid.}
\end{align*}
Bringing the above formula into prenex normal form yields a $\Pi_2$-PA
sentence for which validity can be decided in $\coNEXP$~\cite{Haa14}.
For that reason we focus on the lower bounds in the remainder of this
section.

For \ZVAS, an \NP-lower bound follows straight-forwardly via a
reduction from the feasibility problem of a system of linear
Diophantine equations $A\vec{x}=\vec{b}, \vec{x}\ge \vec{0}$. Despite
some serious efforts, we could not establish a stronger lower bound.
Even though it is known that inclusion for semi-linear sets is
$\mathsf{\Pi}_2^\P$-hard~\cite{Huy85}, this lower bound does not seem
to carry over to inclusion for \ZVAS. 


\begin{lemma}
  Inclusion for \ZVASS \ is \coNEXP-hard even when numbers are encoded
  in unary.
\end{lemma}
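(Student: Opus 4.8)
The plan is to reduce validity of $\Pi_2$-PA sentences, which is $\coNEXP$-complete even under unary encoding, to inclusion for \ZVASS. Fix a sentence $\forall \vec{x}\,\exists \vec{y}.\,\phi(\vec{x},\vec{y})$ with $\vec{x}\in\nat^n$, $\vec{y}\in\nat^m$ and $\phi$ quantifier-free, and let $S=\{\vec{x}\in\nat^n:\exists\vec{y}.\,\phi(\vec{x},\vec{y})\}\subseteq\nat^n$ be the semilinear set it defines on the universally quantified block. Since $S\subseteq\nat^n$, the sentence is valid iff $S=\nat^n$. I would therefore build two \ZVASS $\mathcal{A},\mathcal{B}$ in a common dimension $d=n+L+1$ (with $L$ auxiliary \emph{residual} coordinates and one \emph{flag} coordinate, fixed below), both started in $\vec{0}$, so that $\mathit{reach}(\mathcal{A})\subseteq\mathit{reach}(\mathcal{B})$ holds precisely when $S=\nat^n$.

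The flag coordinate isolates ``finished'' vectors. Machine $\mathcal{A}$ has one transition setting the flag to $1$, followed by self-loops adding $\vec{e}_i$ to each of the first $n$ coordinates and never touching the residual coordinates, so that $\mathit{reach}(\mathcal{A})=\{\vec{0}\}\cup\{(\vec{x},\vec{0},1):\vec{x}\in\nat^n\}$. Machine $\mathcal{B}$ is designed so that its only flagged vectors whose residual coordinates all vanish are exactly $\{(\vec{x},\vec{0},1):\vec{x}\in S\}$ (it may produce further flagged vectors, but only with a non-zero residual, and further unflagged junk). Since $\mathcal{A}$ probes $\mathcal{B}$ only at residual-zero flagged vectors and at $\vec{0}$ (which lies in $\mathit{reach}(\mathcal{B})$ as its start), we obtain $\mathit{reach}(\mathcal{A})\subseteq\mathit{reach}(\mathcal{B})$ iff $\nat^n\subseteq S$ iff the sentence is valid.

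The heart of the construction is realizing $S$ inside $\mathcal{B}$ without zero-tests or resets, in polynomial (indeed unary) size, so that an explicit semilinear expansion of $S$ is out of the question. After pushing negations inward and replacing each inequality $\vec{z}\cdot(\vec{x},\vec{y})\ge c$ by an equation with a fresh nonnegative slack, $\phi$ becomes a positive Boolean combination of linear equations. For the core case where $\phi$ is a single system $M\vec{u}=\vec{c}$ over $\vec{u}=(\vec{x},\vec{y},\text{slacks})\ge\vec{0}$, $\mathcal{B}$ first adds $-\vec{c}$ to the residual block and then uses, for each variable $u_j$, a self-loop that increments the appropriate output coordinate (only when $u_j$ is an $x$-variable) while simultaneously adding the column $M_{\cdot j}$ to the residual block; a final transition sets the flag. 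Crucially, the witness $\vec{y}$ and the slacks are never stored as coordinates, only their additive effect on the residual block is, so they stay genuinely existential, while the single run forces all equations to share one consistent $\vec{u}$. A flagged vector then carries residual block $M\vec{u}-\vec{c}$, which is $\vec{0}$ exactly when $\vec{u}$ solves the system, that is, when its $x$-part lies in $S$. The equality test is thus offloaded from the machine to the inclusion itself: it is $\mathcal{A}$ that pins the residuals to $\vec{0}$.

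The main obstacle is the Boolean structure of $\phi$: a single \ZVASS run is a single path, which can branch for disjunctions but cannot ``fork'' for conjunctions, whereas the inclusion forces every residual coordinate to $0$ simultaneously. I would have $\mathcal{B}$ traverse the formula tree with its control states, choosing one child at each $\vee$-node and visiting both children at each $\wedge$-node, activating a leaf's residual only when its equation is selected; an unselected leaf's residual is never touched and hence is $0$ for free, consistently with $\mathcal{A}$'s demand. The delicate point, and the one I expect to require the most care, is interleaving this activation with the building of the shared $\vec{u}$ so that conjuncts reuse the same witness $\vec{y}$ while the currently active equation set is reflected in the polynomially many control states rather than stored explicitly. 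For the lower bound alone it may be cleaner to first argue that $\coNEXP$-hardness of $\Pi_2$-PA already holds when $\phi$ is a plain conjunction of linear inequalities, in which case the single-system construction above suffices verbatim. Finally, since the hard instances can be taken with unary constants and coefficients, all updates of $\mathcal{A}$ and $\mathcal{B}$ remain unary and polynomially bounded (large counter values arise only from iterating self-loops), so the reduction witnesses $\coNEXP$-hardness even under unary encoding.
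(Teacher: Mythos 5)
Your high-level strategy coincides with the paper's: both reduce from validity of $\Pi_2$-PA, which is \coNEXP-hard under unary encoding, with $\mathcal{A}$ enumerating all universal choices and $\mathcal{B}$ forced by the inclusion to match each of them via an existential guess; and your single-system core (residual coordinates carrying $M\vec{u}-\vec{c}$, pinned to $\vec{0}$ by $\mathcal{A}$'s demand) is sound in isolation. The genuine gap is exactly the step you flag and defer: combining conjunctions with a \emph{shared} witness. If the per-variable loops act globally on the whole residual block, then unselected leaves accumulate the value of equations that need not hold, their residuals are nonzero, and $\mathcal{A}$'s all-zero residual demand fails even for valid sentences; making the loops act only on the currently selected leaves forces the control state to remember which leaf set is active while variables are pumped, and there are exponentially many such sets, so polynomially many control states do not suffice. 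The alternative of activating each leaf locally during the tree traversal, with its own pumping phase, is unsound: for the invalid sentence $\forall x\,\exists y.\,(y=2x \wedge y=3x)$, independent per-leaf witnesses $y_1=2x$ and $y_2=3x$ make both residuals vanish for every $x$ (with the output coordinate incremented once, globally), so inclusion would hold although the sentence is not valid. Your fallback claim --- that $\Pi_2$-PA remains \coNEXP-hard when the matrix is a plain conjunction of inequalities --- is asserted without support: the known hardness proofs exploit nontrivial Boolean structure, and a conjunctive $\exists$-block defines only a restricted subclass of semilinear sets, so ``suffices verbatim'' cannot be taken for granted.

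The paper dissolves precisely this difficulty by a different encoding in which witness sharing is automatic. It normalizes $\varphi$ to a positive Boolean combination of $k$ one-sided terms $t_i = \vec{a}_i\cdot\vec{x}+z_i \ge \vec{b}_i\cdot\vec{y}$ and uses one counter \emph{per term}: $\mathcal{A}$ loads $\vec{z}$ and then pumps, for each $x_j$, the vector $\vec{\ell}_j$ with $\vec{\ell}_j(i)=a_i(j)$, so counter $i$ holds the left-hand side of $t_i$; $\mathcal{B}$ \emph{first} loops at its initial state $p$ on vectors $\vec{r}_j$ with $\vec{r}_j(i)=b_i(j)$, so a single loop per variable $y_j$ updates all $k$ term counters simultaneously and the same $\vec{y}$ is thereby shared across all conjuncts by construction. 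Only afterwards does $\mathcal{B}$ traverse the Boolean structure (conjunction as sequential composition, disjunction as branching), merely decrementing the counters of discarded terms, and at the final state $p_f$ it increments every coordinate to match $\mathcal{A}$'s vector exactly; since a term is an inequality, the counter of a \emph{selected} term need only be matched from below, which succeeds iff the term holds under the guessed $\vec{y}$. No slacks, equations, flag or residual-zero test are needed. Your construction would be repaired by adopting this order of operations --- guess all of $\vec{y}$ first with loops that touch every term at once, and trade exact equations for one-sided matching --- rather than by interleaving activation with pumping.
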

\begin{proof}
We reduce from validity in $\Pi_2$-PA, which is $\coNEXP$-hard already
when numbers are encoded in unary~\cite{Grae89,Haa14}. To this end,
let $\Phi=\forall \vec{x}.\exists \vec{y}.\varphi(\vec{x},\vec{y})$ be
a formula in this fragment such that $\vec{x}$ and $\vec{y}$ are $m$-
and $n$-tuples of first-order variables, respectively. As discussed in
the introduction, with no loss of generality we may assume that
$\varphi(\vec{x},\vec{y})$ is a positive Boolean combination of $k$
terms $t_1,\ldots, t_k$ of the form $t_i= \vec{a}_i\cdot \vec{x} + z_i
\ge \vec{b}_i\cdot \vec{y}$ with $\vec{a}_i\in \zint^{m},\vec{b}_i\in
\zint^n$ and $z_i\in \zint$. In our reduction, we show how to
construct in logarithmic space \ZVASS \ $\mathcal{A},\mathcal{B}$ with
designated control states $q,p$ such that $\Phi$ is valid iff
$\mathit{reach}(\mathcal{A},q(\vec{0}))\subseteq
\mathit{reach}(\mathcal{B},
p(\vec{0}))$. Figure~\ref{fig:inclusion-hardness} illustrates the
structure of the \ZVASS \ $\mathcal{A}$ and $\mathcal{B}$.  A key
point behind our reduction is that the counters of $\mathcal{A}$ and
$\mathcal{B}$ are used to represent evaluations of left-hand and
right-hand-sides of the \emph{terms} of $\varphi(\vec{x},\vec{y})$.
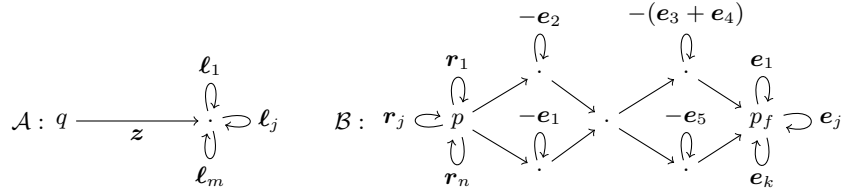
\begin{figure}[t]
\centering
\scalebox{1.0}{
  \begin{tikzpicture}[->,auto]

\node (q) {$q$};
\node [right=5em of q] (qA) {.};
\node [right=-3em of q] (labelA) {$\mathcal{A}:$};

\path (q) edge [swap] node {$\vec{z}$} (qA);
\path (qA) edge [loop above] node {$\vec{\ell}_1$} (qA);
\path (qA) edge [loop right] node {$\vec{\ell}_j$} (qA);
\path (qA) edge [loop below] node {$\vec{\ell}_m$} (qA);

\node [right=15em of q] (qp) {$p$};
\node [right=-6em of qp] (labelB) {$\mathcal{B}:$};
\node [above=1em of qp] (fq) {};
\node [below=1em of qp] (fqp) {};
\node [right=2.5em of fq] (q1) {.};
\node [right=2.5em of fqp] (qp1) {.};
\node [right=5em of qp] (q2) {.};
\node [right=5em of q1] (q3) {.};
\node [right=5em of qp1] (qp3) {.};
\node [right=5em of q2] (q4) {$p_f$};

\path (qp) edge [swap] node {} (q1);
\path (qp) edge [swap] node {} (qp1);
\path (q1) edge [swap] node {} (q2);
\path (qp1) edge [swap] node {} (q2);
\path (q2) edge [swap] node {} (q3);
\path (q2) edge [swap] node {} (qp3);
\path (q3) edge [swap] node {} (q4);
\path (qp3) edge [swap] node {} (q4);

\path (qp) edge [loop above] node {$\vec{r}_1$} (qp);
\path (qp) edge [loop left] node {$\vec{r}_j$} (qp);
\path (qp) edge [loop below] node {$\vec{r}_n$} (qp);

\path (q1) edge [loop above] node {$-\vec{e}_2$} (q1);
\path (qp1) edge [loop above] node {$-\vec{e}_1$} (qp1);

\path (q3) edge [loop above] node {$-(\vec{e}_3 + \vec{e}_4)$} (q1);
\path (qp3) edge [loop above] node {$-\vec{e}_5$} (qp1);

\path (q4) edge [loop above] node {$\vec{e}_1$} (qp);
\path (q4) edge [loop right] node {$\vec{e}_j$} (qp);
\path (q4) edge [loop below] node {$\vec{e}_k$} (qp);


  \end{tikzpicture}
}
  \caption{Illustration of the approach to reduce validity of a
    $\Pi_2$-PA formula $\Phi=\forall \vec{x}.\exists \vec{y}.(t_1 \vee
    t_2) \wedge ((t_3 \wedge t_4) \vee v_5)$ to inclusion for \ZVASS.}
  \label{fig:inclusion-hardness}
\end{figure}


In Figure~\ref{fig:inclusion-hardness}, we have that $\vec{z}\in
\zint^k$ is such that $\vec{z}(i)=z_i$. For $j\in [m]$,
$\vec{\ell}_j\in \zint^{k}$ is such that
$\vec{\ell}_j(i)=a_{i}(j)$. Likewise, for $j\in [n]$, $\vec{r}_j\in
\zint^{k}$ is such that $\vec{r}_j(i)=b_{i}(j)$. When moving away from
state $q$, $\mathcal{A}$ adds the absolute term of each $t_i$ to the
respective counters. It can then choose any valuation of the $\vec{x}$
and thus stores the corresponding values of the left-hand sides of
each $t_i$ in the counters. Now $\mathcal{B}$ has to match the choice
of $\mathcal{A}$. To this end, it can first loop in state $p$ in order
to guess a valuation of the $\vec{y}$ and update the values of the
counters accordingly, which now correspond to the right-hand sides of
the $t_i$. Along a path from $p$ to $p_f$, $\mathcal{B}$ may, if
necessary, simulate the Boolean structure of $\varphi$: conjunction is
simulated by sequential composition and disjunction by branching. For
every conjunct of $\varphi$, $\mathcal{B}$ can non-deterministically
decrement all but one term of every disjunct. Finally, once
$\mathcal{B}$ reaches $p_f$, it may non-deterministically increase the
value corresponding to the right-hand sides of every term in order to
precisely match any value reached by $\mathcal{A}$. From this example,
it is now clear how to construct $\mathcal{A}$ and $\mathcal{B}$ from
$\Phi$ in general in logarithmic space such that $\Phi$ is valid if,
and only if, $\mathcal{B}$ has a run beginning in $p(\vec{0})$ that
matches the counter values reached by any run of $\mathcal{A}$
beginning in $q(\vec{0})$. Obviously, the the converse direction holds
as well.\qed
\end{proof}



\section{Concluding Remarks}
We studied reachability, coverability and inclusion problems for
various classes of \ZVASS, \emph{i.e.}, \VASS \ whose counter values
range over $\zint$. Unsurprisingly, the complexity of those decision
problems is lower for \ZVASS \ when compared to \VASS. However, the
extend to which the complexity drops reveals an element of surprise:
coverability and reachability for \VASS \ in the presence of resets
are $\mathbf{F}_\omega$-complete and undecidable, respectively, but
both problems are only \NP-complete for \ZVASSr.
For the upper bound, we provided a generalization of Parikh images
which we believe is a technical construction of independent interest.

Throughout this paper, the dimension of the \ZVASS \ has been part of
the input. A natural line of future research could be to investigate
the complexity of the problems we considered in fixed dimensions.

\subsubsection*{Acknowledgments.} We would like to thank the anonymous
referees, Sylvain Schmitz and Philippe Schnoebelen for their helpful
comments and suggestions on an earlier version of this paper.


\bibliographystyle{plain} 
\bibliography{bibliography}


\newpage
\begin{appendix}
  \section{Missing Proofs from Section~\ref{sec:def}}

\subsection{Proof of Lemma~\ref{lem:np-hardness}}

\begin{lemma}
  \label{lem:zvas-reachability-hardness}
  Reachability in \ZVAS \ is \NP-hard already when numbers are encoded
  in unary.
\end{lemma}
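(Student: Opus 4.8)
The plan is to reduce from feasibility of a system of linear Diophantine equations $A\vec{x} = \vec{b}$ with $\vec{x} \ge \vec{0}$, which by~\cite{GJ79} is \NP-complete even when the entries of $A$ and $\vec{b}$ are encoded in unary.

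First I would spell out the structure of runs in a \ZVAS. Since there is only a single control state and every transition acts as a translation $\vec{v} \mapsto \vec{v} + \vec{b}_i$, and since the counters range over all of $\zint$ so that no sign constraint is ever violated along a run, the effect of reading a word $\gamma$ starting from $\vec{0}$ depends only on its Parikh image: one has $\tau(\gamma)(\vec{0}) = \sum_{i} |\gamma|_{a_i}\,\vec{b}_i = B\,\pi_\Sigma(\gamma)$, where $B\in \zint^{d\times n}$ is the matrix whose columns are the vectors $\vec{b}_i$. Crucially, with a single state every word of $\Sigma^*$ is a run, so $\pi_\Sigma(\gamma)$ ranges over all of $\nat^n$. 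Hence the reachability set from $q(\vec{0})$ is exactly $\{B\vec{x} : \vec{x}\in\nat^n\}$, and $q(\vec{b})$ is reachable from $q(\vec{0})$ if, and only if, the system $B\vec{x} = \vec{b}$ admits a solution $\vec{x}\in\nat^n$.

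The reduction is then immediate. Given an instance $A\vec{x}=\vec{b}$ with $A\in\zint^{d\times n}$ and $\vec{b}\in\zint^d$, I would build the \ZVAS in dimension $d$ with one state $q$, alphabet $\{a_1,\dots,a_n\}$, and $\tau(a_i) = \vec{v}\mapsto \vec{v}+\vec{a}_i$ where $\vec{a}_i$ is the $i$-th column of $A$, and ask whether $q(\vec{0})\rightarrow^* q(\vec{b})$. By the observation above this holds precisely when $A\vec{x}=\vec{b}$ is feasible over $\nat^n$. The construction copies the entries of $A$ and $\vec{b}$ verbatim into the update vectors and the target configuration, so unary encoding is preserved and the whole reduction runs in logarithmic space. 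That $\vec{b}$ may have negative components is harmless, since \ZVAS counters already range over $\zint$.

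There is no genuine obstacle here once the unary-encoding \NP-hardness of Diophantine feasibility is taken as given; the only step that deserves care is the justification that the reachable set equals the image $B\,\nat^n$ exactly, i.e.\ that order-independence of integer translations together with the absence of sign constraints makes every $\vec{x}\in\nat^n$ realisable by some run, and conversely makes every run's effect of the form $B\vec{x}$ for some $\vec{x}\in\nat^n$.
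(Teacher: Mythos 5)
Your proposal is correct and takes essentially the same route as the paper: a logarithmic-space reduction from unary-encoded feasibility of $A\vec{x}=\vec{b}$, $\vec{x}\ge\vec{0}$ to a one-state \ZVAS\ with one self-loop per variable whose update is the corresponding column of $A$ (the paper phrases this with the transposed convention of row vectors $\vec{a}_i$, which is the same construction), asking whether $q(\vec{0})\rightarrow^* q(\vec{b})$. Your explicit justification that the reachability set equals $\{B\vec{x}: \vec{x}\in\nat^n\}$ via commutativity of translations is exactly the Parikh-counting argument the paper gives in both directions.
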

\begin{proof}
  Let $S:\exists \vec{x}.A\vec{x} = \vec{b}, \vec{x}\ge \vec{0}$ be a
  system of linear Diophantine equations such that $A$ consists of row
  vectors $\vec{a}_1,\ldots,\vec{a}_n$. Determining whether $S$ is
  valid is a well-known $\NP$-hard problem even when numbers are
  encoded in unary~\cite{GJ79}. From $S$, we can easily construct a
  \ZVASS \ $\mathcal{A}$ with one control state $q$ such that
  $q(\vec{0}) \rightarrow^*_\mathcal{A} q(\vec{b})$ if, and only if,
  $S$ is valid as follows: for every $\vec{a}_i$, $\mathcal{A}$ has a
  self-loop reading the alphabet symbol $a_i$ and adding $\vec{a}_i$ to
  the counter. Given a word $\gamma$ witnessing $q(\vec{0})
  \rightarrow^*_\mathcal{A} q(\vec{b})$, counting the numbers of times
  each $a_i$ occurs along $\gamma$ yields a valuation of $\vec{x}$
  such that $A\vec{x}=\vec{b}$. Conversely, any valuation of $\vec{x}$
  such that $A\vec{x}=\vec{b}$ gives rise to a run $q(\vec{0})
  \rightarrow^*_\mathcal{A} q(\vec{b})$.\qed
\end{proof}

\begin{remark}
  If $\mathcal{A}$ is constructed as above, we have that $S$ is valid
  if, and only if, $\{ \lambda\vec{b} : \lambda\in \nat \} \subseteq
  \mathit{reach}(\mathcal{A},q(\vec{0}))$. This shows that inclusion
  for \ZVAS \ is \NP-hard. 
\end{remark}

\subsection{Proof of Lemma~\ref{lem:undecidability}}
\label{undec}

\begin{lemma}
  Let $\mathfrak{D}_d$ be the set of all diagonal matrices in
  dimension $d$. Coverability in \ZRM$(\mathfrak{D}_d)$ is undecidable
  already for $d=4$.
\end{lemma}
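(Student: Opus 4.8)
The plan is to obtain undecidability by transporting the known undecidability of reachability in dimension two through the reachability-to-coverability reduction of Lemma~\ref{equiv}, whose only effect is to double the dimension. As recalled just before the statement, \cite{FGH13} shows that reachability in \ZRM$(\mathfrak{D}_2)$ is undecidable, where $\mathfrak{D}_2$ consists of all diagonal $2\times 2$ matrices. It therefore suffices to give a logarithmic-space reduction from reachability in \ZRM$(\mathfrak{D}_2)$ to coverability in \ZRM$(\mathfrak{D}_4)$, and the construction of Lemma~\ref{equiv} does exactly this, provided we check that it stays within the class of diagonal matrices.

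First I would replay the construction of Lemma~\ref{equiv}. Given a \ZRM$(\mathfrak{D}_2)$ machine $\mathcal{A}$ and configurations $q(\vec{v}), q'(\vec{v}')$, I build a dimension-$4$ machine $\mathcal{B}$ by replacing every affine transformation $\vec{v}\mapsto A\vec{v}+\vec{b}$ (with $A\in\mathfrak{D}_2$, $\vec{b}\in\zint^2$) by $\vec{v}\mapsto A'\vec{v}+\vec{b}'$, where $A'$ is the block-diagonal matrix carrying two copies of $A$ and $\vec{b}'=(\vec{b},-\vec{b})$, and then ask whether $q(\vec{v},-\vec{v})$ can cover $q'(\vec{v}',-\vec{v}')$ in $\mathcal{B}$. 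Here is the first place where diagonality must be preserved: since $A$ is diagonal, the block matrix $A'$ is again diagonal, now in dimension $4$, so $\mathcal{B}$ is indeed a \ZRM$(\mathfrak{D}_4)$.

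The crux is the invariant that every run of $\mathcal{B}$ started from $q(\vec{v},-\vec{v})$ keeps the lower two counters equal to the negation of the upper two. In Lemma~\ref{equiv} this used only $A=I$, so I would verify that it survives genuine scaling. For a coordinate $j\in\{1,2\}$ with diagonal entry $\lambda$, the upper block updates $\vec{v}(j)\mapsto\lambda\vec{v}(j)+\vec{b}(j)$ while the lower block updates $\vec{v}(j+2)\mapsto\lambda\vec{v}(j+2)-\vec{b}(j)$; if $\vec{v}(j+2)=-\vec{v}(j)$ before the step, then afterwards the lower value is $-\lambda\vec{v}(j)-\vec{b}(j)=-(\lambda\vec{v}(j)+\vec{b}(j))$, exactly the negation of the new upper value. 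Thus the invariant is an inductive consequence of the elementary identity $A(-\vec{u})=-A\vec{u}$ for linear maps, combined with the negation of the constant part $\vec{b}$; this is the single point where the diagonal case (as opposed to the identity case) needs attention.

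Finally I would close the reduction as in Lemma~\ref{equiv}: because the invariant forces every reachable configuration of $\mathcal{B}$ to have the form $q''(\vec{w},-\vec{w})$, covering the target $q'(\vec{v}',-\vec{v}')$ requires simultaneously $\vec{w}\ge\vec{v}'$ and $-\vec{w}\ge-\vec{v}'$, hence $\vec{w}=\vec{v}'$. Consequently $\mathcal{B}$ covers $q'(\vec{v}',-\vec{v}')$ from $q(\vec{v},-\vec{v})$ if, and only if, $\mathcal{A}$ reaches $q'(\vec{v}')$ from $q(\vec{v})$, yielding the desired reduction and hence undecidability of coverability in \ZRM$(\mathfrak{D}_4)$. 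I expect no real obstacle: the whole argument is a direct adaptation of Lemma~\ref{equiv}, and the only substantive observation is that the negation invariant is robust under diagonal updates, which is why $d=4$ already suffices.
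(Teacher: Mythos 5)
Your proposal is correct and follows essentially the same route as the paper: undecidability of reachability for \ZRM$(\mathfrak{D}_2)$ from~\cite{FGH13} combined with the dimension-doubling reduction of Lemma~\ref{equiv}, checking that block-diagonal duplication stays within diagonal matrices and that the negation invariant $\vec{v}(j+2)=-\vec{v}(j)$ is preserved under updates $\lambda\vec{v}(j)+\vec{b}(j)$ versus $\lambda\vec{v}(j+2)-\vec{b}(j)$ --- precisely the ``straight forward adaption'' the paper invokes. The only difference is one of presentation: since Reichert's dimension-two result had not appeared in written form, the paper's appendix additionally reproduces his PCP-based reduction for self-containedness, whereas you (legitimately) treat it as a black box and instead spell out the invariant verification that the paper leaves implicit.
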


\begin{proof}
  Reachability for \ZRM$(\mathfrak{D}_2)$ is undecidable as announced
  in~\cite{FGH13}. This result has been obtained by J.\ Reichert and
  has not yet appeared in written format. For the sake of
  completeness, here we first repeat Reichert's argument.

  Undecidability is shown via reduction from the undecidable Post
  Correspondence Problem (PCP). Given $u_1, \dots, u_n, v_1, \dots,
  v_n \in \{0,1\}^*$, PCP asks whether there are some $i_1, \dots,
  i_p$ ($p>0$) such that $u_{i_1}\cdots u_{i_p} = v_{i_1}\cdots
  v_{i_p}$. Below, we define a \ZRM($\mathfrak{D}_2$) $\mathcal{A} =
  (\{ q_0, q_f\} \cup Q, \{ 0, 1, \tilde{0}, \tilde{1}, \#\}, 2,
  \Delta, \tau)$ such that there is a run from $q_0(\vec{0})$ to
  $q_f(\vec{0})$ in $\mathcal{A}$ if, and only if, there is a solution
  to the above PCP instance:
  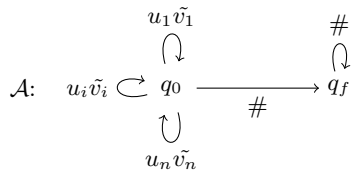
\begin{figure}[htbp]
\centering
\scalebox{1.0}{
  \begin{tikzpicture}[->,auto]

\node[circle] (q0) {$q_0$};
\node [right=5em of q0] (qf) {$q_f$};
\node [right=-8em of q0] (label) {$\mathcal{A}$:};

\path (q0) edge [swap] node {$\#$} (qf);
\path (q0) edge [loop above] node {$u_1\tilde{v_1}$} (q0);
\path (q0) edge [loop left] node {$u_i\tilde{v_i}$} (q0);
\path (q0) edge [loop below] node {$u_n\tilde{v_n}$} (q0);
\path (qf) edge [loop above] node {$\#$} (qf);


  \end{tikzpicture}
}
\vspace*{-1em}\caption{The \ZRM($\mathfrak{D}_2$) $\mathcal{A}$ used
  for the reduction from PCP.}
\label{fig-problems}
\vspace*{-1em}
\end{figure}

  $\mathcal{A}$ has $n$ self-loops on $q_0$, and each of these loops
  is labeled by a word $w=u_i\tilde{v_i}$. This, of course, actually
  corresponds to a path with $|w|$ states such that the path reads
  $w$. We now define $\tau$ as:
  \begin{align*}
  \tau(0)(\vec{v}) & = 
    \begin{pmatrix}
      2 \ 0 \\
      0 \ 1 
    \end{pmatrix} \vec{v} & \tau(\tilde{0})(\vec{v}) & = 
    \begin{pmatrix}
      1 \ 0 \\
      0 \ 2 
    \end{pmatrix} \vec{v}\\
    \tau(1)(\vec{v}) & = 
    \begin{pmatrix}
      2 \ 0 \\
      0 \ 1 
    \end{pmatrix} \vec{v} + 
    \begin{pmatrix}
      1\\
      0 
    \end{pmatrix}
    & \tau(\tilde{1})(\vec{v}) & = 
    \begin{pmatrix}
      1 \ 0 \\
      0 \ 2 
    \end{pmatrix} \vec{v} + 
    \begin{pmatrix}
      0\\
      1 
    \end{pmatrix}\\
    \tau(\#)(\vec{v}) & = \vec{v} - 
    \begin{pmatrix}
      1 \\
      1
    \end{pmatrix}
  \end{align*}
  
  The idea is that to reach $q_f(\vec{0})$ from $q_0(\vec{0})$, the
  two counters must be equal when leaving state $q_0$. Thinking of
  counter values encoded in binary, the counters represent the
  concatenation of the $u_i$ and the $v_i$, respectively, since in
  binary, multiplying by $2$ corresponds to concatenating $0$, and
  multiplying by $2$ and adding $1$ corresponds to concatenating
  $1$. Looping non-deterministically on $q_0$, the machine ``guesses''
  an order to make the two counters, \emph{i.e.}, words, equal.

  Note that the only matrices appearing in $\mathcal{A}$ are diagonal,
  and of dimension 2. By application of Lemma~\ref{equiv}, we obtain
  that coverability is henceforth undecidable for matrices from
  $\mathfrak{D}_4$. \qed
\end{proof}

Note that reachability for $\mathfrak{D}_1$ is shown decidable in,
which implies that coverability is decidable in this setting as
well. However, coverability for $\mathfrak{D}_2$ and $\mathfrak{D}_3$
remain an open problems. It is surprising to have such a complexity
gap between \ZRM \ with diagonal matrices and \ZRM \ with diagonal
matrices with only zeros and ones, which respectively make our
problems undecidable and \NP-complete. Thus, it is natural to wonder
whether some decidable class of matrices lies in between.




  \section{Missing Proofs from Section~\ref{sec:zvass}}

\subsection{Proof of Lemma \ref{lem:parikh-to-effect}}

\begin{lemma}
  Let $\mathcal{A}$ be a \ZVASSr, $\vec{v}\in \zint^d$, $\gamma \in
  (\Sigma\dotcup R)^*$, $(\vec{\alpha}_0,\vec{\alpha}_1, \dots,
  \vec{\alpha}_d, \sigma)\in\Pi(\gamma)$ a generalized Parikh image of $\gamma$ and
  $B\in \zint^{d\times n}$ the matrix whose columns are the vectors
  $\vec{b}_i$. Then the following holds:
  \begin{align*}
    \tau(\gamma)(\vec{0}) =  \sum_{1\le i\le d} 
    (B\vec{\alpha}_{i-1})_{|\{\sigma(i), \dots, \sigma(d)\}} + B\vec{\alpha}_d.
  \end{align*}
\end{lemma}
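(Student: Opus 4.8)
The plan is to prove the identity component by component. I fix an arbitrary counter $c\in[d]$, write $c=\sigma(m)$ for the unique index $m\in[d]$, and set $S_i=\{\sigma(i),\dots,\sigma(d)\}$. By the definition of the reset operator $(\cdot)_{|S}$ from Section~\ref{sec:def}, the $c$-th component $\big[(B\vec{\alpha}_{i-1})_{|S_i}\big](c)$ equals $(B\vec{\alpha}_{i-1})(c)$ when $c\notin S_i$ and $0$ otherwise, and since $\sigma$ is a permutation we have $c=\sigma(m)\in S_i$ exactly when $m\ge i$. Hence $\big[(B\vec{\alpha}_{i-1})_{|S_i}\big](c)$ survives precisely for $i>m$, and the whole task reduces to showing $\tau(\gamma)(\vec{0})(c)=\sum_{i=m}^{d}(B\vec{\alpha}_i)(c)$.

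The semantic key is the cutting behaviour of resets, which I would isolate as a short induction on $|\gamma|$: since the only transformations are additions of the $\vec{b}_j$ on letters of $\Sigma$ and the maps $\vec{v}\mapsto\vec{v}_{|i}$ on monitored letters, additions to distinct counters are independent, appending $a_j$ adds $\vec{b}_j(c)$, appending $r_c$ zeroes the current $c$-value, and appending $r_{c'}$ with $c'\neq c$ leaves $c$ untouched. It follows that $\tau(\gamma)(\vec{0})(c)$ equals the sum of $\vec{b}_j(c)$ over all occurrences of letters $a_j$ lying strictly after the last occurrence of $r_c$ in $\gamma$ (over all occurrences, if $r_c$ is absent). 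Now I locate that last occurrence inside the canonical decomposition $\gamma=\gamma_p\,r_{\sigma(p+1)}\gamma_{p+1}\cdots r_{\sigma(d)}\gamma_d$ of Definition~\ref{def}. If $m>p$, the last $r_c=r_{\sigma(m)}$ is exactly the displayed letter between $\gamma_{m-1}$ and $\gamma_m$; condition~(a) guarantees that none of $\gamma_m,\dots,\gamma_d$ contains $r_c$, and the reset letters occurring in this surviving suffix affect only counters $\sigma(m+1),\dots,\sigma(d)$, all distinct from $c$. Thus the surviving $\Sigma$-increment to $c$ is $\sum_{i=m}^{d}\pi_\Sigma(\gamma_i)(c)$ weighted by $B$, i.e. $\sum_{i=m}^{d}(B\vec{\alpha}_i)(c)$. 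If instead $m\le p$, then $r_c$ never occurs, so the value of $c$ is the total $\sum_{i=p}^{d}(B\vec{\alpha}_i)(c)$; by condition~(b) the vectors $\vec{\alpha}_0,\dots,\vec{\alpha}_{p-1}$ are $\vec{0}$, so this again equals $\sum_{i=m}^{d}(B\vec{\alpha}_i)(c)$.

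It remains to match this with the claimed right-hand side. Using the component description above,
\[
\sum_{i=1}^{d}\big[(B\vec{\alpha}_{i-1})_{|S_i}\big](c)+(B\vec{\alpha}_d)(c)
=\sum_{i=m+1}^{d}(B\vec{\alpha}_{i-1})(c)+(B\vec{\alpha}_d)(c)
=\sum_{i=m}^{d}(B\vec{\alpha}_i)(c),
\]
which is exactly the value just computed; as $c$ was arbitrary, the vector identity follows. I expect the main obstacle to be the bookkeeping rather than any deep idea: rigorously justifying the ``only the suffix after the last reset survives'' claim via the induction on $|\gamma|$, and carefully reconciling the index shift induced by $\sigma$ with the padding convention $\vec{\alpha}_i=\vec{0}$ for $i<p$ so that the two summation ranges coincide. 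In particular, one must explicitly dismiss the non-final resets of $c$ that condition~(a) permits inside earlier partial words $\gamma_i$ with $i<m$: those increments are harmless precisely because they are later erased by the final reset $r_{\sigma(m)}$.
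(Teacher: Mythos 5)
Your proof is correct, and it takes a genuinely different route from the paper's. The paper argues at the vector level: it strengthens the claim to the identity $\tau(\gamma_p r_{\sigma(p+1)}\cdots r_{\sigma(j)}\gamma_j)(\vec{0})_{|\{\sigma(j+1),\dots,\sigma(d)\}} = \sum_{i=0}^{j}(B\vec{\alpha}_i)_{|\{\sigma(i+1),\dots,\sigma(d)\}}$ for every prefix of the canonical decomposition, and proves it by induction on the decomposition index $j$ from $p$ to $d$, using commutativity of addition within each partial word for the base case and a small calculus of the projection operators for the step (that $\tau(\gamma_j)$ acts purely additively outside $S=\{\sigma(j+1),\dots,\sigma(d)\}$ because $\gamma_j$ only resets counters in $S$, and that $[\vec{u}_{|\sigma(j)}]_{|S}=\vec{u}_{|\{\sigma(j)\}\cup S}$); the lemma is the instance $j=d$. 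You instead fix a single coordinate $c=\sigma(m)$, isolate the scalar ``last reset cuts'' fact by induction on $|\gamma|$, locate the last occurrence of $r_c$ inside the decomposition (the displayed $r_{\sigma(m)}$ when $m>p$; absent when $m\le p$, where the padding $\vec{\alpha}_i=\vec{0}$ for $i<p$ absorbs the index mismatch, as you correctly observe), and match the right-hand side coordinate-wise via the equivalence $c\in\{\sigma(i),\dots,\sigma(d)\}\iff i\le m$. Your bookkeeping is right throughout, including the two points that are easiest to fumble: earlier, non-final occurrences of $r_c$ inside $\gamma_i$ for $i<m$ (permitted by condition (a) and erased by the final reset) and the index shift between $\vec{\alpha}_{i-1}$ and the surviving range $i\ge m$. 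What your approach buys is elementarity and transparency --- no strengthened induction hypothesis and no projection-operator algebra --- and it is in fact the same per-coordinate survival analysis (a term survives in coordinate $c$ iff $\sigma^{-1}(c)$ is small enough) that the paper itself reuses, via the coefficients $\lambda_{ij}$, when constructing $\varphi_{\text{counters}}$ in the proof of Corollary~\ref{cor:reachability-presburger}, so it meshes well with that later step. What the paper's formulation buys is a compact closed form for every prefix of the run, stated in the same vector notation as the lemma. The only piece you defer is the induction on $|\gamma|$ for the cutting claim; it is routine, since $\tau$ acts coordinate-wise, with $a_j$ adding $\vec{b}_j(c)$ to coordinate $c$ and $r_{c'}$ touching $c$ only when $c'=c$, but it should be written out for completeness.
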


\begin{proof}
Let $p$ be the number introduced in Definition \ref{def}, we prove the following stronger statement by induction on $j \in [p, d]$:
\begin{align*}
\tau(\gamma_p r_{\sigma(p+1)} \gamma_{p+1} \dots r_{\sigma(j)} \gamma_j)(\vec{0})_{|\{\sigma(j+1), \dots, \sigma(d)\}} = \sum_{i = 0}^{j} (B\vec{\alpha}_i)_{|\{\sigma(i+1), \dots, \sigma(d)\}}
\end{align*}
where $\gamma = \gamma_p r_{\sigma(p+1)} \gamma_{p+1} \dots r_{\sigma(d)} \gamma_k$ is the decomposition introduced in Definition \ref{def}. We then conclude by taking $j = d$. 
\begin{itemize}
\item Base case $j=p$:

Let $S= \{ \sigma(p+1), \dots, \sigma(d) \}$. Since only resets $r_i$ for $i \in S$ occurs in $\gamma_p$ by definition of the decomposition, and since addition is commutative and associative, only the number of times each letter appear is important. Therefore:
\begin{align*}
\tau(\gamma_p)(\vec{0})_{|S} = \sum_{i = 1}^n |\gamma_p|_{a_i} . (\vec{b}_i)_{|S} = (B \vec{\alpha}_p)_{|S} = \sum_{i=0}^p (B\vec{\alpha}_i)_{|\{\sigma(i+1), \dots, \sigma(d)\}}
\end{align*}
Remember that for $i < p$, $\vec{\alpha}_i = \vec{0}$, which explains the last equality.
\item Induction step: Let $S = \{ \sigma(j+1), \dots, \sigma(d) \}$
  and $S' = \sigma(j) \cup S$ and $\gamma' = \gamma_p r_{\sigma(p+1)}
  \dots \gamma_{j-1}$:
\begin{align}
 \tau(\gamma_p & r_{\sigma(p+1)} \dots r_{\sigma(j)} \gamma_j)(\vec{0})_{|S}\notag  \\
    & = \tau(\gamma'r_{\sigma(j)}\gamma_j)(\vec{0})\notag \\
    & = \tau(r_{\sigma(j)}\gamma_j)(\tau(\gamma')(\vec{0}))_{|S} \\
    & = \tau(\gamma_j)( [\tau(\gamma')(\vec{0})]_{|\sigma(j)})(\vec{0}))_{|S} \\
    & = [(\tau(\gamma')(\vec{0})))_{|\sigma(j)} + \tau(\gamma_j)(\vec{0})]_{|S}  \\
    & = [(\tau(\gamma')(\vec{0}))_{|\sigma(j)}]_{|S} + \tau(\gamma_j)(\vec{0})_{|S} \notag \\
    & = \tau(\gamma')(\vec{0})_{|S'} + \tau(\gamma_j)(\vec{0})_{|S} \notag \\
    & = \sum_{i = 0}^{j-1} (B\vec{\alpha}_i)_{|\{\sigma(i+1), \dots, \sigma(d)\}} + (B\vec{\alpha}_j)_{|S} \\
    & = \sum_{i = 0}^{j} (B\vec{\alpha}_i)_{|\{\sigma(i+1), \dots, \sigma(d)\}}, \notag
\end{align}
where
\begin{enumerate}[(1)]
\item by definition of $\tau$
\item by definition of $\tau(r_{\sigma(j)})$
\item as $\gamma_j$ has only resets from $S$
\item by induction hypothesis.
\end{enumerate}
\end{itemize}
\qed
\end{proof}

\subsection{Proof of Corollary \ref{cor:reachability-presburger}}

Throughout this section, let $\mathcal{A}=(Q,\Sigma\dotcup
R,d,\Delta,\tau)$ be a \ZVASSr. Before we give the proof of
Corollary~\ref{cor:reachability-presburger}, we prove the following
lemma.
\begin{lemma}
  \label{lem:counters-logic}
  There exists a logarithmic-space computable existential Presburger
  formula $\varphi_\text{counters}(\vec{\alpha}, \vec{\sigma}, p,
  \vec{v}, \vec{v}')$ such that $(\vec{\alpha}, \vec{\sigma}, p,
  \vec{v}, \vec{v}')\in \eval{\varphi_\text{counters}}$ if, and only
  if, there is a word $\gamma\in (\Sigma\dotcup R)^*$ such that
  $\tau(\gamma)(\vec{v}) = \vec{v}'$ and $(\vec{\alpha}, \vec{\sigma})
  \in \Pi(\gamma)$ with $p$ being the number introduced in
  Definition~\ref{def}.
\end{lemma}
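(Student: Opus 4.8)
The plan is to lift Lemma~\ref{lem:parikh-to-effect}, which already writes $\tau(\gamma)(\vec{0})$ as a Presburger term in a generalized Parikh image, to an arbitrary starting vector $\vec{v}$ and to repackage the result as an existential formula in which $\vec{\alpha},\vec{\sigma},p,\vec{v},\vec{v}'$ are all free. The first step is a purely arithmetic identity. Since $\tau(\gamma)$ is affine and the linear part of each letter is either the identity or a $\{0,1\}$-diagonal reset matrix, the linear part of $\tau(\gamma)$ is the $\{0,1\}$-diagonal matrix that kills exactly the counters reset along $\gamma$. Hence counter $\ell$ retains its initial value $\vec{v}(\ell)$ iff no $r_\ell$ occurs in $\gamma$, i.e.\ iff $\ell\notin\{\sigma(p+1),\dots,\sigma(d)\}$, giving
\[
  \tau(\gamma)(\vec{v}) = \vec{v}_{|\{\sigma(p+1),\dots,\sigma(d)\}} + \tau(\gamma)(\vec{0}).
\]
I would prove this identity first and then substitute the closed form of $\tau(\gamma)(\vec{0})$ from Lemma~\ref{lem:parikh-to-effect}, obtaining an explicit expression for each $\vec{v}'(\ell)$ in terms of $\vec{\alpha},\vec{\sigma},p$ and $\vec{v}$.

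Next I would observe that the tuples $(\vec{\alpha},\vec{\sigma},p)$ that actually arise as generalized Parikh images are exactly those satisfying three elementary conditions: $\vec{\sigma}$ is a permutation of $[d]$ (the formula $\varphi_{\text{perm}}$ from the proof of Theorem~\ref{gpi}), $0\le p\le d$, and $\vec{\alpha}_i=\vec{0}$ for every $i<p$. Indeed any such tuple is realized by $\gamma=\gamma_p r_{\sigma(p+1)}\gamma_{p+1}\cdots r_{\sigma(d)}\gamma_d$ with each $\gamma_i$ chosen as an arbitrary word over $\Sigma$ having $\pi_\Sigma(\gamma_i)=\vec{\alpha}_i$; such $\gamma_i$ trivially meets condition (a) of Definition~\ref{def}. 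Consequently, once the data is declared valid, a witnessing $\gamma$ exists automatically and the displayed equation pins down $\vec{v}'$ uniquely. This gives both directions of the equivalence: if the formula holds we read off a witness $\gamma$, and conversely any genuine witness yields valid data whose effect on $\vec{v}$ is precisely the displayed expression.

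The main technical obstacle is expressing, inside the existential fragment, the maskings $(\cdot)_{|\{\sigma(i),\dots,\sigma(d)\}}$ and $\vec{v}_{|\{\sigma(p+1),\dots,\sigma(d)\}}$, whose cut-off sets depend on the \emph{variables} $\vec{\sigma}$ and $p$. The key point is that these memberships are quantifier-free definable: $\ell\in\{\sigma(i),\dots,\sigma(d)\}$ as $\theta_{i,\ell}:=\bigvee_{j\ge i}(\sigma_j=\ell)$, and $\ell\in\{\sigma(p+1),\dots,\sigma(d)\}$ as $\rho_\ell:=\bigvee_{j}(p<j\wedge\sigma_j=\ell)$. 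For each component $\ell\in[d]$ I would introduce auxiliary variables $c_{1,\ell},\dots,c_{d,\ell},u_\ell$ governed by case splits kept in the quantifier-free matrix, namely $(\theta_{i,\ell}\rightarrow c_{i,\ell}=0)\wedge(\neg\theta_{i,\ell}\rightarrow c_{i,\ell}=\sum_a(\vec{b}_a)(\ell)\,\alpha_{i-1}^a)$ together with the analogous clause defining $u_\ell$ from $\vec{v}(\ell)$ via $\rho_\ell$, and then assert $\vec{v}'(\ell)=u_\ell+\sum_i c_{i,\ell}+\sum_a(\vec{b}_a)(\ell)\,\alpha_d^a$ (reading $\vec{v},\vec{v}'$ over $\zint$ through the usual difference-of-naturals encoding). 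Conjoining $\varphi_{\text{perm}}$, the bounds on $p$, the dummy-vector constraints, and these $d$ component clauses yields $\varphi_{\text{counters}}$. Every ingredient is a polynomial-size formula read directly off $\mathcal{A}$ (its matrix $B$, dimension $d$, and $|\Sigma|=n$), so the construction is logarithmic-space computable; I expect the only delicate part to be the bookkeeping of these masking clauses—keeping all implications quantifier-free and the index ranges aligned with Definition~\ref{def}.
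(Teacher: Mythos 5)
Your proposal is correct and matches the paper's proof in essence: both encode the masked sum of Lemma~\ref{lem:parikh-to-effect} in existential Presburger arithmetic by case-splitting on whether $\sigma^{-1}(\ell)$ exceeds the relevant index, the paper doing this via guarded telescoping partial sums $\beta^i_j$ together with a vector $\vec{\nu}$ carrying the surviving initial values (guard $\sigma^{-1}(i)\le p$), where you use directly defined masked summands $c_{i,\ell}$ and $u_\ell$ --- a cosmetic difference --- and your affine-shift identity $\tau(\gamma)(\vec{v})=\vec{v}_{|\{\sigma(p+1),\dots,\sigma(d)\}}+\tau(\gamma)(\vec{0})$ is exactly what the paper's $\vec{\nu}$ implements. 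Your explicit inclusion of $\varphi_{\text{perm}}$, the bound $0\le p\le d$, and the dummy constraints $\vec{\alpha}_i=\vec{0}$ for $i<p$, together with the realizability argument that any such tuple is the generalized Parikh image of some word, is a small but genuine improvement: the paper's $\varphi_{\text{counters}}$ omits these conditions, so as literally written it is also satisfied by tuples that are not generalized Parikh images of any word and only yields the stated equivalence once conjoined with $\Psi'_{\mathcal{B}}$ in Corollary~\ref{cor:reachability-presburger}, whereas your version establishes the lemma's biconditional as stated.
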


\begin{proof}
  In Presburger arithmetic, the equality
  $\tau(\gamma)(\vec{v})=\vec{v}'$ is actually represented by $d$
  equalities $\tau(\gamma)(\vec{v})(i) = \vec{v}'(i)$ for $i\in
  [d]$. Lemma~\ref{lem:parikh-to-effect} states that for any $i \in
  [d]$, $\tau(\gamma)(\vec{v})(i) = \sum_{j = 0}^d
  (B\vec{\alpha}_j)_{|\{\sigma(j+1), \dots, \sigma(d)}(i) =
  \sum_{j=0}^d \lambda_{ij}(B\vec{\alpha}_j)(i)$ where
  \begin{align*}
    \lambda_{ij} & =
    \left\{ \begin{array}{ll} 0 &\text{ iff } i\in \{ \sigma(j+1),
      \dots, \sigma(d) \} \text{ iff } \sigma^{-1}(i) \ge (j+1) \\ 1
      &\text{ otherwise}.
    \end{array} \right.
  \end{align*}
  This last equality is not a syntactically correct Presburger term
  since it is quadratic instead of linear. We therefore introduce
  intermediate variables $\beta_j^i$ to compute the partial sums
  $\beta_j^i = \sum_{k=0}^j \lambda_{ik} (B\vec{\alpha}_k)(i)$:
  \begin{multline*}
    \varphi_\text{counters} (\vec{\alpha}, \vec{\sigma}, p, \vec{v}, \vec{v}') = 
    \exists \vec{\beta}. \exists \vec{\nu}. 
    \bigwedge_{i=1}^d \beta_0^i = 0 \wedge \vec{v}'(i) = \beta_d^i + \vec{\nu}(i) 
    \wedge
    \\\wedge \bigwedge_{k=1}^d (\sigma(k) = i) \rightarrow \bigwedge_{j=1}^d 
    (k > j \rightarrow \beta_j^i = \beta_{j-1}^i)  \wedge 
      (k \le j \rightarrow \beta_j^i = \beta_{j-1}^i + (B\vec{\alpha}_j)(i)
    \wedge \\
    \wedge (k > p \rightarrow \vec{\nu}(i) = 0) \wedge
    (k \le p \rightarrow \vec{\nu}(i) = \vec{v}(i)).
  \end{multline*}
  When reading this formula, one should see $k$ as $\sigma^{-1}(i)$,
  and therefore recognize the second line to be the condition
  expressed above. As $\beta_j^i$ are the partial sums up to $j$,
  $\beta_d^i$ represents the complete sum on the dimension $i$. The
  final vector $\vec{v}'$ is thus equal to the vector made of the
  $\beta_d^i$ plus the starting vector $\vec{v}$ in which the right
  components have been erased: this is the vector $\vec{\nu}$.  \qed
\end{proof}

\begin{corollary}[Corollary~\ref{cor:reachability-presburger} in the main text]
  Let $\mathcal{A}$ be a \ZVASSr \ and $p,q\in Q$. There exists a
  logarithmic-space computable existential Presburger formula
  $\Phi_{\mathcal{A}}(q',q,\vec{v},\vec{w},\vec{\alpha},\vec{\sigma})$
  such that $(p,q,\vec{v},\vec{w},\vec{\alpha},\vec{\sigma})\in
  \eval{\Phi_{\mathcal{A}}}$ if, and only if, there is $\gamma\in
  (\Sigma\dotcup R)^*$ such that
    $\tilde{q'}(\vec{v}) \stackrel{\gamma}{\rightarrow}_\mathcal{A}
    \tilde{q}(\vec{w})$
  and $(\vec{\alpha}, \sigma)\in \Pi(\gamma)$, where $\sigma(i)=\vec{\sigma}(i)$.
\end{corollary}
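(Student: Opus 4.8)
The plan is to assemble the desired formula from the two pieces already in hand: the existential Presburger formula $\Psi_{\mathcal{B}}$ of Theorem~\ref{gpi}, which describes the generalized Parikh image of the language accepted by the NFA underlying $\mathcal{A}$, and the formula $\varphi_{\text{counters}}$ of Lemma~\ref{lem:counters-logic}, which turns a generalized Parikh image together with a starting vector into the resulting counter values. The single conceptual point that makes this composition sound is Lemma~\ref{lem:parikh-to-effect}: the effect $\tau(\gamma)(\vec{v})$ of a word depends \emph{only} on its generalized Parikh image $(\vec{\alpha},\sigma)$ and the decomposition parameter $p$. Hence the combinatorial question ``is there an accepting path realizing this generalized Parikh image'' decouples cleanly from the arithmetic question ``does this generalized Parikh image realize the counter transition $\vec{v}\mapsto\vec{w}$'', and the two can be expressed by separate subformulas that merely share the variables $\vec{\alpha},\vec{\sigma},p$.

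First I would let $\mathcal{B}$ be the NFA underlying $\mathcal{A}$, that is, the same control graph with each transition labelled by its alphabet symbol (reset symbols $r_i\in R$ included) and the affine actions discarded. As noted immediately after Theorem~\ref{gpi}, $\Psi_{\mathcal{B}}$ is easily modified so that the initial control state becomes a free variable; the same modification applies symmetrically to the final state, and I would also leave the parameter $p$ free instead of quantifying it internally. Concretely this replaces the conjunct $s_0=q_0\wedge\bigvee_{\tilde q\in F}t_k=q$ and the outer $\exists p$ by the constraints $s_0=s$ and $t_k=t$, yielding a formula $\Psi'_{\mathcal{B}}(\vec{\alpha},\vec{\sigma},p,s,t)$. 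By Theorem~\ref{gpi}, $\Psi'_{\mathcal{B}}(\vec{\alpha},\vec{\sigma},p,q',q)$ holds precisely when there is $\gamma\in(\Sigma\dotcup R)^*$ with $\tilde{q'}\to_{\mathcal{B}}\tilde q$ reading $\gamma$ and $(\vec{\alpha},\sigma)\in\Pi(\gamma)$ witnessed by $p$.

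Then I would set
\[
\Phi_{\mathcal{A}}(q',q,\vec{v},\vec{w},\vec{\alpha},\vec{\sigma}) \;=\; \exists p.\; 0\le p\le d \;\wedge\; \Psi'_{\mathcal{B}}(\vec{\alpha},\vec{\sigma},p,q',q) \;\wedge\; \varphi_{\text{counters}}(\vec{\alpha},\vec{\sigma},p,\vec{v},\vec{w}).
\]
Because the two subformulas share $\vec{\alpha},\vec{\sigma},p$, any satisfying assignment forces a single generalized Parikh image to be realizable by a path from $q'$ to $q$ in the control graph (left conjunct) and to produce the transition $\vec{v}\mapsto\vec{w}$ (right conjunct, via Lemma~\ref{lem:parikh-to-effect}); conversely, any run $\tilde{q'}(\vec{v})\stackrel{\gamma}{\to}_{\mathcal{A}}\tilde q(\vec{w})$ supplies a generalized Parikh image of $\gamma$ satisfying both. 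The formula is existential Presburger since $\Psi'_{\mathcal{B}}$ and $\varphi_{\text{counters}}$ are, and we add only one outer existential quantifier and conjunctions. For the resource bound, $\varphi_{\text{counters}}$ is logarithmic-space computable by Lemma~\ref{lem:counters-logic}, and the construction of $\Psi'_{\mathcal{B}}$ from Theorem~\ref{gpi} is regular enough that each output symbol can be emitted using only logarithmic workspace, so the whole formula is logarithmic-space computable.

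The main obstacle will not be any single calculation but making the interface between the two halves airtight: one must verify that $p$ and $\sigma$ are interpreted identically in $\Psi'_{\mathcal{B}}$ and in $\varphi_{\text{counters}}$ (both following Definition~\ref{def}, including the padding conventions for monitored letters absent from $\gamma$ and the associated zero vectors $\vec{\alpha}_0,\dots,\vec{\alpha}_{p-1}$), and that $\Psi'_{\mathcal{B}}$ indeed ranges over \emph{all} generalized Parikh images of words accepted between the two chosen states. Once the shared variables are confirmed to encode the same object, the stated equivalence is immediate from Theorem~\ref{gpi} and Lemma~\ref{lem:counters-logic}.
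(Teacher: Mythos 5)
Your proposal coincides essentially verbatim with the paper's own proof: the paper likewise defines $\Psi'_{\mathcal{B}}$ as the formula $\Psi_{\mathcal{B}}$ of Theorem~\ref{gpi} with the quantification on $p$ removed and the initial and final states turned into free variables, and concludes by conjoining it with $\varphi_{\text{counters}}$ from Lemma~\ref{lem:counters-logic} under a single outer $\exists p$. The point you emphasize to justify sharing the variables $\vec{\alpha},\vec{\sigma},p$ --- that by Lemma~\ref{lem:parikh-to-effect} the counter effect depends only on the generalized Parikh image, so the path-existence and arithmetic conditions decouple --- is exactly the implicit content of the paper's one-line argument.
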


\begin{proof}
  Note $\Psi'_\mathcal{B}$ for the formula $\Psi_\mathcal{B}$
  (\emph{cf.} Theorem~\ref{gpi}) without the quantification on $p$,
  and with additionnal free variables q and q' for the initial and
  final states.  By Lemma~\ref{lem:counters-logic}, we conclude with:
\begin{align*} 
  \Phi_{\mathcal{A}}(q',q,\vec{v},\vec{w},\vec{\alpha},\vec{\sigma}) = 
  \exists p. \ \Psi'_{\mathcal{B}}(\vec{\alpha}, \vec{\sigma}, p, q, q') \wedge 
  \varphi_\text{counters}(\vec{\alpha}, \vec{\sigma}, p, \vec{v}, \vec{v}').
\end{align*}
\qed
\end{proof}

\end{appendix}

\end{document}